\newtheorem{theorem}{Theorem}
\newtheorem*{theorem*}{Theorem}
\newtheorem*{conjecture*}{Conjecture}
\newtheorem{lemma}[theorem]{Lemma}
\newtheorem{observation}[theorem]{Observation}
\newtheorem{proposition}[theorem]{Proposition}
\newtheorem{definition}[theorem]{Definition}
\newtheorem{claim}[theorem]{Claim}
\newtheorem*{claim*}{Claim}
\renewcommand{\epsilon}{\varepsilon}
\newcommand{\quotebox}[1]
{\medbreak
\fcolorbox{white}{blue!15!gray!15}
{\begin{minipage}{0.95\linewidth}
{\emph{#1}}
\end{minipage}}
\medbreak}
\newcommand{\pparagraph}[1]{
\vspace{0.13in}\noindent{\textbf{\boldmath #1}}~}  
\newcommand{\cal}[1]{\mathcal{#1}}
\newcommand{\E}{\mathbb{E}}
\renewcommand{\Pr}{\mathbb{P}}
\newcommand{\defn}{\textbf}
\newif\ifnotes
\newcommand{\shikha}[1]{\textcolor{blue}{{\footnotesize #1}}\marginpar{\raggedright\tiny \textcolor{blue}{Shikha}}}
\newcommand{\shikha}[1]{}
\newif\ifnotes
\newcommand{\aditya}[1]{\textcolor{purple}{{\footnotesize #1}}\marginpar{\raggedright\tiny \textcolor{purple}{Aditya}}}
\newcommand{\aditya}[1]{}
\title{Unbalanced Random Matching Markets with Partial Preferences}
\author{Aditya Potukuchi}
\address[AP]{Department of Electrical Engineering and Computer Science, York University, Toronto, ON M3J 1P3, Canada}
\email{apotu@yorku.ca}
\author{Shikha Singh}
\address[SS]{Department of Computer Science, Williams College, Williamstown, MA 01267, USA}
\email{shikha@cs.williams.edu}
\begin{document}

\maketitle

% Optionally include a table of contents
%\setcounter{tocdepth}{2} % adjust to 1 if desired

\begin{abstract}

Properties of stable matchings in the popular random-matching-market model have been studied for over 50 years. In a random matching market, each agent has complete preferences drawn uniformly and independently at random.  Wilson (1972), Knuth (1976) and Pittel (1989) proved that in balanced random matching markets, the proposers are matched to  their $\ln n$th choice on average.  In this paper, we consider markets where agents have partial (truncated) preferences, that is, the proposers only rank their top $d$ partners. Despite the long history of the problem, the following fundamental question remained unanswered: \emph{what is the smallest value of $d$ that results in a perfect stable matching with high probability?}  In this paper, we answer this question exactly---we prove that a degree of $\ln^2 n$ is necessary and sufficient.  That is, we show that if $d < (1-\epsilon) \ln^2 n$ then no stable matching is perfect and if $d > (1+ \epsilon) \ln^2 n$, then every stable matching is perfect with high probability. This settles a recent conjecture by Kanoria, Min and Qian (2021).

We generalize this threshold for unbalanced markets: we consider a matching market with $n$ agents on the shorter side and $n(\alpha+1)$ agents on the longer side. We show that for markets with $\alpha =o(1)$, the sharp threshold characterizing the existence of perfect stable matching occurs when $d$ is $\ln n \cdot \ln \left(\frac{1 + \alpha}{\alpha + (1/n(\alpha+1))} \right)$.

Finally, we extend the line of work studying the effect of imbalance on the expected rank of the proposers (termed the ``stark effect of competition''). We establish the regime in unbalanced markets that forces this stark effect to take shape in markets with partial preferences.

\end{abstract}

\section{Introduction}\label{sec:intro}

The stable matching problem is a classic problem in computer science, economics and mathematics.  In the standard and perhaps antiquated description of the problem as the ``stable marriage problem'', the two sides of the market are men and women, each with complete preferences over each other.  A matching is considered to be \defn{stable} if no pair of agents would prefer to break off from their current match and match with each other instead. Stability has proved to be an extremely robust notion in the success of centralized two-sided matching markets~\cite{roth2002economist}.  The first stable matching algorithm proposed by~\cite{gale1962college} has since been used in many real-life matching markets such as matching doctors to hospitals, students to schools, and applicants to jobs; see for example~\cite{roth1999redesign,abdulkadirouglu2005new,abdulkadirouglu2005boston,correa2019school}.

Initiated by~\cite{wilson1972analysis,knuth1976mariages}, there has over 50 years of fascinating research to understand the properties of stable matchings in random-matching markets, where the preferences of the agents are generated uniformly and independently at random.

\subsection{Our Results} 
In this paper, we study a fundamental and natural question about random matching markets with partial preference lists: 
\quotebox{If each agent ranks only their top $d$ choices (uniformly and independently at random), how big does $d$ have to be to ensure that all agents are matched?} 
We consider a two-sided market, with potentially unequal number of agents on each side, and say that a stable matching is \defn{perfect} if it matches all agents on the short side.

As our main result, we tightly characterize the existence of perfect stable matchings.
%in random markets.

\begin{theorem*}(Informal)
Given a stable matching market with $n+k$ candidates, $n$ jobs and uniform-random preference lists of length $d$, there exists a \defn{threshold} $d_0$ such that for any (constant) $\epsilon > 0$, 
\begin{itemize}
\item if $d > (1 + \epsilon)d_0$, then all stable matchings are {perfect} with high probability, and
\item if $d < (1 - \epsilon)d_0$, then no stable matching is perfect  with high probability.
\end{itemize}
Moreover, we determine $d_0$ asymptotically:

\begin{enumerate}
    \item\label{part:balanced} for \defn{\em balanced markets} with $k = 0$,~~ $d_0 = \ln^2 n$
    \item\label{part:unbalanced} for \defn{\em unbalanced markets} with $k = o(n)$,~~ $d_0 = \ln n \ln \frac{1 +(k/n)}{(k/n) + 1/(n+k)}$
\end{enumerate}
\end{theorem*}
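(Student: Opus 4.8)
\medskip
\noindent\textbf{Proof outline.}

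\emph{Reduction.} The plan is to reduce the theorem to a sharp count, in a single run of deferred acceptance, of the proposers forced to exhaust their lists. By the Rural Hospitals Theorem the set of matched agents is the same in every stable matching, so it suffices to analyze the matching produced by deferred acceptance run with the short-side agents -- the holders of the length-$d$ lists -- proposing; running it from the other side gives the same matched set. A proposer ends up unmatched exactly when all $d$ entries of its list have rejected it, so the output is a perfect matching (all $n$ short-side agents matched) if and only if no proposer exhausts its list. Writing $E_d$ for the number of exhausted proposers, the theorem becomes: $E_d=0$ w.h.p.\ when $d>(1+\epsilon)d_0$, and $E_d\ge1$ w.h.p.\ when $d<(1-\epsilon)d_0$.

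\emph{The rejection-chain process.} I would analyze the run by the principle of deferred decisions in McVitie--Wilson order: proposers arrive one at a time, each starting a rejection chain in which the active proposer proposes to a uniformly random partner it has not yet tried, is accepted iff it beats that partner's current match -- a fresh coin of bias equal to the reciprocal of the number of proposals that partner has received so far -- and the ousted incumbent becomes active. A chain ends either when a proposal lands on a still-unmatched receiver, decrementing the number $U$ of unmatched receivers, or when the active proposer's $d$-th and final proposal is rejected, leaving it permanently unmatched; thus $E_d$ counts exactly the chains of the second type, and such a chain ends precisely when the active proposer's running proposal count (its rank) reaches $d$. When $U$ receivers are unmatched, a proposal lands on one of them with probability $U/(n+k)$ up to a $1\pm o(1)$ factor from the partners the active proposer has already tried, so each rejection of a fixed proposer happens with probability essentially $1-U/(n+k)$; coupling the chain dynamics with this idealized landing rate makes the tail of a proposer's rank $R$ -- and hence the law of $E_d$ -- a computable functional of the trajectory of $U$, with $\Pr[R\ge t]\approx e^{-t/\rho}$, where $\rho$ denotes the typical proposer rank.

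\emph{Locating the threshold and proving both directions.} The maximum rank over the $n$ proposers, and hence $d_0$, is governed by the late phase of the run, where $U$ has fallen to $\Theta(\max(k,1))$; there a proposal reaches an unmatched receiver only with probability $\approx\max(k,1)/(n+k)$, so the bouncing is long and the unluckiest proposer is pushed to rank of order $\ln n\cdot\ln\frac{n}{\max(k,1)}$. Carried out with the right constants -- this is essentially the rank analysis behind the paper's stark-effect results, giving $\rho=(1+o(1))\ln\frac{1+\alpha}{\alpha+1/(n(\alpha+1))}$ with $\alpha=k/n$ -- this shows that $\E[E_d]=n\,\Pr[R\ge d]$ passes from polynomially large to polynomially small as $d$ crosses $d_0:=(1+o(1))\,\ln n\cdot\rho$, which is exactly the claimed value and equals $\ln^2 n$ for $\alpha=0$. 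Given this, the two halves are moment estimates. For $d>(1+\epsilon)d_0$, dominating each rank's tail from above by a genuine geometric variable gives $\E[E_d]=n^{-\Omega(\epsilon)}\to0$, so Markov's inequality yields $E_d=0$ w.h.p. For $d<(1-\epsilon)d_0$, a matching lower bound on the landing rate gives $\E[E_d]=n^{\Omega(\epsilon)}\to\infty$, and a second-moment computation -- exploiting that, for two proposers, the events that they exhaust their lists are only weakly correlated, interacting solely through the common trajectory of $U$ -- shows $E_d\ge1$ w.h.p.

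\emph{The main obstacle.} Everything delicate lives in the late phase. There the landing-rate approximation $U/(n+k)$ is least accurate, the sets of partners already tried by the long-lived proposers are no longer negligible next to $U$, and the rejection chains acquire genuine dependence -- a dying chain both removes a proposer and perturbs $U$. Pinning down the constant in front of $\ln^2 n$, and thereby confirming the Kanoria--Min--Qian conjecture, is the crux: I would run the coupled $U$-process only until $U$ first drops below a slowly growing multiple of $\max(k,1)$, establish concentration of $U$ along the way via a Doob/Azuma martingale in the proposal index (or a Chernoff bound for a suitable supermartingale), and handle the final short stretch by a direct argument. Carrying the imbalance $\alpha=k/n$ faithfully through both the bulk of the run and this tail is precisely what produces the interpolating closed form for $d_0$.
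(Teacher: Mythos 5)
Your upper-bound direction matches the paper's in spirit: fix a proposer, place it last in the proposal order, bound the probability that each of its $d$ proposals fails (the paper's version is Theorem~\ref{thm:threshupper}, which couples with a balls-in-bins process to control $\E[\tau]$, then bounds the chance a resulting rejection chain knocks the proposer off its match), and conclude by a first moment/Markov argument. The key subtlety you gesture at — a ``successful'' proposal has to survive the ensuing rejection chain — is handled in the paper by a clean abstraction, the probability game of Lemma~\ref{lem:probgame}, which you do not make explicit but would effectively need.

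The lower-bound direction is where your plan genuinely diverges, and where it has a gap. You propose a second-moment computation on $E_d$, the number of proposers that exhaust their lists, asserting the pairwise correlations are weak because they ``interact solely through the common trajectory of $U$.'' This is precisely the step that is neither obvious nor carried out. Two proposers' exhaustion events are \emph{not} functions only of $U$; they are tied together through the full state of the run (who has proposed to whom, current tentative matches), and the events in question are tail events in a strongly interacting process. You would need a genuine covariance bound showing $\operatorname{Cov}(\mathbf{1}_{R_i>d},\mathbf{1}_{R_j>d})=o(\Pr[R>d]^2)$, and you do not indicate how. The paper avoids second moments entirely: it runs the algorithm from the \emph{long} side proposing (CPDA on $M^{\mathcal{C}}$), conditions on the state at a fixed time $\kappa=n\ln n - Cn$, and then treats the subsequent rejection chains sequentially. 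Each chain is modeled by the probability game with ``win'' $=$ a new job matched and ``lose'' $=$ a proposer exhausts; Lemma~\ref{lem:probgame} bounds the win probability, and a union bound over $\approx \alpha n$ consecutive chains shows that w.h.p.\ more than $\alpha n$ proposers exhaust before all $n$ jobs are hit, so some job remains unmatched. This sidesteps variance control entirely and isolates the target event directly. Your choice of always proposing from the short side is also worth flagging — it forces the exhaustion count to reach only $1$, which is exactly why a second moment seems tempting; proposing from the long side, as the paper does, lets one argue about a larger count where a simple union bound already suffices. Finally, you prove the result only for the one-sided model where the proposers carry length-$d$ lists, whereas Theorems~\ref{thm:balancedmain} and~\ref{thm:unbalancedmain} are stated for the symmetric Erd\H{o}s--R\'enyi model $M_{n,d,\alpha}$; the paper bridges the two via the sandwiching Lemma~\ref{lem:sandwich}, a step your write-up omits.
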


For balanced markets, the above result settles a recent conjecture by~\cite{kanoria2021matching} who prove an weaker bound---they show $d = o(\ln^2 n)$ is not sufficient and $d = \Omega(\ln^2 n)$ is sufficient for a perfect stable matching to exist.  Even for complete preference lists, the best-known bound on the maximum number of proposals any agent has to make is $\Theta(\log^2 n)$ by~\cite{pittel2019likely}.

For unbalanced markets, Theorem~\ref{thm:unbalancedmain} provides a fundamentally new understanding of perfect stable matchings. Prior to this work, the best-known bounds are from~\cite{kanoria2021matching} who show that the threshold $d_0 = \Theta(\ln^ 2 n)$ continues to hold even in the presence of polynomially-small imbalance.  In particular, for imbalance $k = \Omega(n^{1-\epsilon})$, it was unknown whether (a) such a phenomenon of a sharp threshold continues to persist, and (b) how the existence of perfect stable matchings depends on the imbalance $k$ and preference list $d$ (even asymptotically). %Indeed, 

Finally, we also analyze how the competition in the market affects the best-possible stable partner that the candidates can obtain.

 \begin{theorem*}(Informal)
Given a stable matching market with $n+k$ candidates, $n$ jobs and uniform-random preference lists of length $d$, the expected rank of the best-possible stable partner each candidate can obtain is at least $(1-o(1)) d/(\ln \frac{1 +(k/n)}{(k/n) + 1/(n+k)})$.
\end{theorem*}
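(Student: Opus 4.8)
The plan is to analyze the candidate-proposing deferred-acceptance algorithm via the principle of deferred decisions, tracking the total number of proposals made across all candidates. Let $\beta = \ln\frac{1+(k/n)}{(k/n)+1/(n+k)}$ be the normalizing quantity in the statement. I would argue that when every candidate has a preference list of length $d$, the algorithm (run to completion, or until some candidate exhausts their list) makes at most $nd$ proposals in total; on the other hand, for the resulting matching to match a $(1-o(1))$-fraction of candidates, the total number of proposals must be at least $(1-o(1)) \cdot n \cdot d/\beta$ in expectation, because each individual candidate needs, on average, about $d/\beta$ proposals before being (tentatively) accepted to a job they rank. Combining these forces the average rank of the assigned partner to be $\Omega(d/\beta)$. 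The key quantitative input is a coupon-collector-type estimate: a job receiving a uniformly random proposal from among $n+k$ candidates, when it already holds its current favorite among some set, gets "improved upon" with probability roughly $1/(\text{rank})$, and summing the expected number of proposals a candidate makes before reaching acceptance rank $r$ gives the harmonic-type growth that produces the factor $\beta$ in the unbalanced regime (this is exactly the mechanism behind the $\ln n$ vs.\ $\ln n \cdot \ln((1+\alpha)/\alpha)$ dichotomy seen in the threshold $d_0$).

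Concretely, the steps I would carry out are: (1) Set up the candidate-proposing DA with deferred decisions, so that each proposal goes to a fresh uniformly random job (among those not yet rejected by that candidate), and define $R_i$ = rank, on candidate $i$'s list, of the job $i$ is finally matched to (or $d$ if $i$ exhausts the list unmatched). (2) Show that $\sum_i R_i$ equals the total number of proposals made, up to lower-order terms coming from the last, unsuccessful proposal of each exhausted candidate. (3) Lower-bound $\E[\sum_i R_i]$: condition on the set $J$ of jobs currently "active" (holding a proposal) of size roughly $m$; a new proposal is accepted with probability about $(n - m)/(n+k) + (\text{improvement terms})$, and integrating this rejection probability as $m$ grows from $0$ toward $n$ gives, for a typical candidate, an expected number of proposals before first acceptance of order $\ln\frac{n+k}{k}$ scaled appropriately — and iterating the "keep proposing until you land in a job you outrank its holder" sub-process yields the full factor $1/\beta$ after accounting for the list length $d$. (4) Use a concentration argument (Azuma/Doob martingale on the proposal sequence, or a second-moment bound as in the threshold proof) to upgrade the expectation bound to a with-high-probability statement, and hence conclude that the average best-stable-partner rank is at least $(1-o(1))\, d/\beta$.

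The main obstacle I anticipate is step (3): making the "integrate the acceptance probability over the growing set of active jobs" heuristic rigorous. The difficulty is that the active set does not grow deterministically, acceptances and improvements are correlated across the history, and a candidate's list is finite (length $d$), so one must carefully handle the truncation — a candidate who would "want" to keep proposing past rank $d$ instead contributes exactly $d$, and one must show this truncation does not destroy the $1/\beta$ factor in the regime where a $(1-o(1))$-fraction is matched. I expect the cleanest route is to couple with an idealized infinite-list process, bound the discrepancy, and invoke the monotonicity/lattice structure of stable matchings (so that it suffices to analyze the candidate-optimal stable matching produced by DA, which is exactly the "best-possible stable partner" in the statement). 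The unbalanced bookkeeping — keeping $\alpha = k/n$ as a parameter throughout rather than specializing — will require care but is the same bookkeeping already developed for the threshold $d_0$, so I would lean on those lemmas.
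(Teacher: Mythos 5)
Your route is genuinely different from the paper's, and the difference matters. You analyze CPDA directly and try to \emph{lower-bound} the total number of proposals made; the paper instead fixes a candidate $c^*$ and invokes the truncation lemma (Lemma~\ref{lem:PDA*}, due to Cai--Thomas, going back to Knuth/McVitie--Wilson): $c^*$'s best-possible stable partner has rank $<i$ if and only if $c^*$ would still receive a proposal of rank $<i$ in the \emph{job}-proposing DA when $c^*$ rejects everything. This converts the desired lower bound on rank into an \emph{upper} bound on the number $Z$ of proposals $c^*$ receives in JPDA, and upper bounds on proposals received are exactly what the balls-in-bins/coupon-collector coupling hands you for free by stochastic domination. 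The paper then gets $\E[Z] \le (1+o(1))\beta$ and concludes by Jensen that $\E[d/(Z+1)] \ge d/\E[Z+1] \ge (1-o(1))d/\beta$. This direction switch is the idea your plan is missing.

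The gap this leaves in your proposal is exactly where you flag it, step (3), but it is more severe than you suggest. The coupon-collector argument gives a lower bound of roughly $n\ln n$ on the total number of CPDA proposals (all $n$ jobs must be hit), and your "unmatched candidates exhaust their lists" observation gives roughly $\alpha n d$. Neither approaches the target $\approx n(1+\alpha)d/\beta$: already for $\alpha = 1/n$ and $d=n$ (the Ashlagi--Kanoria--Leshno setting), the total CPDA proposals is $\Theta(n^2/\ln n)$, a factor $n/\ln^2 n$ beyond the coupon-collector floor. That extra factor comes from rejection-chain amplification, and it is precisely the hard part of the AKL theorem; "integrate the acceptance probability as the active set grows" does not capture it, and the claim that each candidate "needs about $d/\beta$ proposals on average" is the conclusion restated, not an input. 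Your step (4) is also unnecessary, since the statement is about expectations only. If you want a CPDA-side proof, you would essentially have to reprove a Knuth-style matching lower bound on the total number of proposals with the imbalance parameter $\alpha$ carried through, which is substantially harder than the JPDA/truncation route the paper takes.
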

 Theorem~\ref{thm:ranklower} captures the intuitive property that the more the competition between the $n+k$ candidates for the $n$ jobs, the worse their stable outcome (as a function of the imbalance $k$).  Thus, it generalizes part of the results in the line of work studying the ``stark effect of competition'' in unbalanced matching markets~\cite{ashlagi2017unbalanced, cai2022short, kanoria2021matching}.

\subsection{Background on Stable Matchings}
To provide some context for our results, we first review the classic \defn{deferred-acceptance (DA) algorithm}  by~\cite{mcvitie1970stable, gale1962college} to find a stable matching.  
Given the preference lists of each candidate and job, consider the following procedure in which the candidates ``propose'' to jobs on their list. At each point, a candidate without a potential match (chosen arbitrarily) proposes to the highest job on their list that they have not been rejected by yet. On receiving a proposal from a candidate $c$, a job $j$ becomes potentially matched to $c$ if $c$ is the highest ranked among the candidates that have proposed to $j$ so far, otherwise $j$ rejects $c$. The process goes on until every candidate has either been rejected by every job on their list, or has a potential match. At the end, the potential matches are final.  

The DA algorithm has some interesting properties that those unfamiliar may find surprising: (a) it always find a stable matching, and (b) this stable matching results in the best-possible stable partner for all candidates, and the worst-possible stable partner for the jobs.  Beyond algorithmic application, these properties also make the DA algorithm an important tool in the study of random instances. Notably, some of the early results by \cite{wilson1972analysis, knuth1976mariages, pittel1989average} on balanced matching markets with uniform-random and complete preference lists show that each candidate on average is matched to their $\ln n$th ranked job and each job is matched to their $(n/\ln n)$th ranked candidate. 

The DA algorithm readily generalizes to partial preferences, which were first studied by~\cite{gusfield1989stable}.  Notice that if the preference list length $d$ is too small, some candidates/jobs may remain unmatched.  Remarkably, the set of unmatched agents stays the same in every stable matching (this is referred to as the Lone Wolf Theorem~\cite{mcvitie1971stable}).  Thus, if a perfect stable matching exists, then all stable matchings are perfect.

\subsection{Model and Formal Statement of Results}
Before we proceed, we formally define the model. Consider a two-sided matching instance $M_{n, d, \alpha}$ with $n(\alpha+1)$ candidates and $n$ jobs, where $\alpha \geq 0$. For balanced markets with $\alpha = 0$, we omit it from the instance and write $M_{n,d}$.

An edge $(c, j)$ between a candidate $c$ and job $j$ occurs independently with probability $d/n$  (thus the resulting graph is a bipartite Erd\"{o}s-R\'{e}nyi graph $G_{n(\alpha+1),n,d/n}$ graph). 
Each candidate and job have random ranking over their neighbors, chosen uniformly and independently. So, $M_{n,n}$ is just the random stable matching instance with complete preference lists.\footnote{We abuse notation and use $M_{n,d}$ to refer both to the stable matching instance (graph and rankings) and the graph itself.}

We refer to $d$ as the \defn{degree} of the underlying random graph.   One may easily deduce that for a perfect stable matching, $d$ needs to be at least $(1-o(1))\ln n$, since otherwise, $M_{n,d}$ has no perfect matching with high probability. It turns out that the critical threshold for perfect stable matchings is a bit higher, and is $\ln^2 n$. 

This is the formal statement of our first main theorem.

\begin{theorem}\label{thm:balancedmain}
For each $\epsilon > 0$,
\begin{enumerate}
\item if $d \geq (1 + \epsilon)\log^2 n$, then w.h.p., all stable matchings in $M_{n,d}$ are perfect, and
\item if $d \leq (1 - \epsilon)\log^2 n$, then w.h.p., no stable matching in $M_{n,d}$ is perfect.
\end{enumerate}
\end{theorem}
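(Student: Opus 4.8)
\textbf{Proof proposal for Theorem~\ref{thm:balancedmain}.}

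The plan is to analyze the candidate-proposing deferred-acceptance algorithm on $M_{n,d}$ with $d$ around $\log^2 n$, tracking how many proposals accumulate at the jobs. The key intuition is a ``two-level coupon collector'' heuristic: in the balanced market, a candidate must make on the order of $\log n$ proposals before being accepted, and a job must receive on the order of $\log n$ proposals before it becomes the worst-off job that finally gets matched; since each proposal consumes one slot on a candidate's length-$d$ list, we expect trouble exactly when $d$ drops below $(\log n)(\log n)=\log^2 n$. To turn this into a proof I would run DA in the principle-of-deferred-decisions style: when a candidate proposes, reveal the next neighbor on its list uniformly among the $n$ jobs (independently, since the graph is $G_{n,n,d/n}$ with i.i.d.\ rankings), truncating each candidate after $d$ proposals. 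Then the number of proposals received by a fixed job is, up to lower-order corrections, a sum of near-independent indicator variables, and the process of matching all $n$ candidates is well-approximated by throwing $\Theta(n\log n)$ balls into $n$ bins.

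For the positive direction (part~1, $d\ge(1+\epsilon)\log^2 n$), I would show that with high probability the DA algorithm terminates with every candidate matched before any candidate exhausts its list. Concretely, I would bound the total number of proposals made over the whole run: a standard argument (as in Pittel / Knuth for the complete case) shows the total proposal count concentrates around $n\ln n$, so the \emph{average} candidate makes $\ln n$ proposals, but to get a \emph{perfect} matching I need that \emph{no} candidate makes more than $d$. This calls for a union bound over candidates together with a tail bound on the number of proposals a single candidate makes; the natural tool is to compare the ``rejection chain'' seen by one candidate to a birth-and-death / branching process, or to use the Pittel-style integral formula for the extreme displacement, showing the maximum number of proposals is $(1+o(1))\log^2 n$ w.h.p. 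The $(1+\epsilon)$ slack is exactly what absorbs the second-order fluctuations in this maximum. I would also need a complementary ``no job starves'' statement, but in the balanced case the binding constraint is the heavy-proposal candidate, not an unmatched job, so the main work is the candidate-side tail bound.

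For the negative direction (part~2, $d\le(1-\epsilon)\log^2 n$), by the Lone Wolf Theorem it suffices to exhibit \emph{one} stable matching (equivalently, the candidate-optimal one produced by DA) that is not perfect, i.e.\ to show some candidate exhausts all $d$ entries of its list without being accepted. Here I would argue that a constant fraction of jobs must receive $\Omega(\log n)$ proposals (a coupon-collector lower bound on the job side: the last few jobs to be ``covered'' need that many tries), so the total number of proposals is $\ge(1-o(1))n\log n$ concentrated from below; combined with a first-moment/second-moment computation on the number of candidates whose list is entirely contained in the set of already-saturated jobs when they are asked to propose, one finds that when $d<(1-\epsilon)\log^2 n$ the expected number of ``starved'' candidates tends to infinity, and a second-moment argument upgrades this to existence w.h.p. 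The cleanest formalization is probably to condition on the candidate-optimal matching of the first $n-1$ candidates (or of a $(1-\delta)$-fraction), note the remaining candidates face an essentially ``full'' set of jobs each already holding a proposer of known rank, and show the probability that a length-$d$ list beats all its sampled incumbents is $e^{-(1+o(1))d/\log n}\le e^{-(1-\epsilon')\log n}=n^{-1+\epsilon'}$, so an expected $n^{\epsilon'}\to\infty$ candidates fail.

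The main obstacle, in both directions, is controlling the dependencies introduced by running DA: the rank a job holds, which candidates have already been rejected where, and the truncation at $d$ are all correlated, so the clean ``balls into bins'' picture is only approximate. I expect the heart of the proof to be a careful coupling or a controlled deferred-decisions exposure showing that, up to $(1\pm o(1))$ factors in the exponent, the number of proposals made by a tagged candidate stochastically dominates / is dominated by a tractable process (geometric rejection chain with success probability $\approx(\log n)/n$ per proposal), and that the maximum of $n$ such near-independent copies is $(1+o(1))\log^2 n$. Getting both the first-moment estimate and a matching second-moment (for the lower bound) or maximal-inequality (for the upper bound) through this coupling, while keeping the error terms inside the $\epsilon$-slack, is where essentially all the work lies.
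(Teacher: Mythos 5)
Your heuristic (two competing forces: a candidate exhausting a length-$d$ list versus all $n$ jobs being covered, with threshold $\ln n\cdot\ln n$) matches the paper's, but both directions of your plan stop exactly where the actual work lies, and in the positive direction your route is essentially circular with respect to the known state of the art. You propose to prove part (1) by showing that in candidate-proposing DA the \emph{maximum} number of proposals made by any candidate is $(1+o(1))\log^2 n$ w.h.p. That statement is not available to cite: as the paper notes, even for complete lists only a $\Theta(\log^2 n)$ bound on the maximum was known (Pittel), and pinning the constant to $1$ is essentially equivalent to the theorem itself (a perfect stable matching at truncation $d$ is equivalent, under the natural coupling, to the complete-list maximum being at most $d$). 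Your sketch defers this to ``a branching-process comparison or Pittel-style integral formula'' plus a union bound, but the per-candidate tail bound with the sharp rate $1/\ln n$ per proposal, uniformly over the whole run (including re-proposals after being bumped, and the strong dependence between a candidate's proposals and the jobs' current ranks), is precisely the hard step and is not supplied. The paper sidesteps this entirely: it runs the \emph{job-proposing} DA on $M^{\mathcal{J}}_{n,\alpha,d}$, places a tagged job last, and bounds the probability that all $d$ of its proposals fail, where a single proposal fails either by immediate rejection (probability $\approx 1-1/(N_c+1)$) or because the ensuing rejection chain bounces the job off; the latter is controlled by the probability-game lemma (Lemma~\ref{lem:probgame}) together with a bound on the expected total number of proposals $\E[\tau]$ (Claim~\ref{lem:jpda-balls-bins}). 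No maximal inequality over $n$ agents is needed — only a per-proposal estimate and linearity of expectation over jobs, with the Lone Wolf Theorem transferring the conclusion to all stable matchings.

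For part (2), your first-moment heuristic (each candidate starves with probability $\approx e^{-d/\ln n}\ge n^{-1+\epsilon'}$) gives the right order, but the proposed second-moment upgrade over the events ``candidate $c$ is starved'' is not carried out, and these events are strongly dependent through the DA dynamics (shared rejection chains, shared incumbents), so the variance computation is a genuine obstacle, not a routine one. The paper avoids any second moment: it stops CPDA at $\kappa=n\ln n-Cn$ proposals, when w.h.p. the per-proposal rejection probability is already $\approx 1-\tfrac{1}{\ln n}$ and only $O(e^{2C})$ jobs are uncovered (Claim~\ref{claim:probbounds}), and then applies the game lemma to each subsequent rejection chain: a chain ``wins'' (covers a new job) with conditional probability at most $p_G/p_B^{d}\approx e^{2C}n^{-\gamma/2}$, so w.h.p. some chain produces $d$ consecutive rejections — i.e.\ a starved candidate — before the last jobs are covered. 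Finally, note that Theorem~\ref{thm:balancedmain} concerns $M_{n,d}$ (binomial degrees induced by $G_{n,n,d/n}$), while your exposure argument implicitly assumes fixed-length candidate lists; the paper bridges this with the sandwiching lemma (Lemma~\ref{lem:sandwich}), a step your proposal omits. So the overall architecture you describe is plausible as intuition, but the proposal is missing the mechanisms (rejection-chain game, job-proposing analysis of a last agent, stopping-time plus balls-in-bins estimates, and model sandwiching) that make the $\epsilon$-sharp constants provable.
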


The DA algorithm also naturally motivates another distribution over random stable matching instances, where each candidate chooses a set of $d$ (chosen in advance) jobs along with a uniform ranking independently, and each job ranks the candidates uniformly and independently. Let us use $M_{n,d}^{\mathcal{C}}$ to describe this model. This model of \emph{partial} preferences ($d<n$) was first considered by~\cite{immorlica2005marriage, immorlica2015incentives}, who assume that the proposing side has preference lists consisting of only $O(1)$ jobs (drawn independently at random using an arbitrary distribution).  They show that the number of candidates with more than one stable partner makes up a vanishingly small fraction. The study of $M_{n,d}^{\mathcal{C}}$ was revisited recently by~\cite{kanoria2021matching}, who showed that if $d = o(\ln^2 n)$, then w.h.p. no stable matching is perfect and if $d = \Omega(\ln^2 n)$, then w.h.p. every stable matching is perfect. Empirically, they observe that the threshold is about $\ln ^2 n$ for reasonable values of $n$.

It turns out that these different models are closely related to each other in the sense that some results (particularly, the type we are interested in) often translate between them when $d \gg \ln n$. Indeed, we prove that in this regime, a random instance from one of these models can be ``sandwiched'' using instances of another model that are close enough; see Lemma~\ref{lem:sandwich}.

We now formally state our general theorem on threshold for perfect stable matching.

\begin{theorem}\label{thm:unbalancedmain}
Consider a stable matching instance $M_{n, d, \alpha}$ where $0 \leq \alpha$ and $\alpha = o(1)$ and $1 \leq d \leq n$.  Then, for each $\epsilon >0$, 
\begin{enumerate}
    \item\label{part:unbalancedlower} if $d > (1 + \epsilon) \ln n \cdot \ln \left(\frac{1 + \alpha}{\alpha + (1/n(1+\alpha))} \right)$, then w.h.p, all stable matchings are perfect, and

    \item\label{part:unbalancedupper} if $d < (1- \epsilon) \ln n \cdot \ln \left(\frac{1 + \alpha}{\alpha + (1/n(1+\alpha))} \right)$, then w.h.p, no stable matching is perfect.
\end{enumerate}

\end{theorem}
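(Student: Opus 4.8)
The plan is to analyze the candidate-proposing deferred-acceptance (DA) algorithm. By the Lone Wolf theorem, ``all stable matchings are perfect'' is equivalent to ``every job is matched in candidate-proposing DA,'' and since the first proposal a job ever receives leaves it matched forever, this is in turn equivalent to ``every job receives at least one proposal.'' By Lemma~\ref{lem:sandwich} it suffices to prove the statement in whichever of the equivalent random models is most convenient (the regime $d = O(\ln n)$ being trivial, as $M_{n,d,\alpha}$ then has an isolated job w.h.p.); I expect to work mostly in $M^{\mathcal{C}}$, where each candidate holds a uniformly random $d$-subset of the jobs with a uniformly random internal ranking.

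I would run DA with lazy exposure of preferences in the McVitie--Wilson ``one candidate at a time'' order, so that the state after $s$ proposals is the set of currently-matched jobs together with, for each matched job $j$, the number $R_j$ of proposals it has received (a fresh arrival at a matched $j$ displaces the incumbent with probability $1/(R_j+1)$, else is rejected, the displaced candidate re-entering the chain). The first goal is a clean reduction: the total number of proposals $T$ concentrates (by a Doob martingale in the candidates' preference data, or a Warnke-type bounded-differences inequality), and, since distinct candidates are independent and each candidate's proposal set is a uniform subset of the jobs, the expected number of unmatched jobs is $(1+o(1))\,n\exp(-T/n)$; hence, away from the critical scale, the number of unmatched jobs is w.h.p.\ $0$ once $T \ge (1+\Omega(1))\,n\ln n$ and is w.h.p.\ $n^{\Omega(1)}$ once $T \le (1-\Omega(1))\,n\ln n$. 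So it remains to pin down $T = T(d,\alpha,n)$ to leading order and to show $T(d)\gtrless n\ln n$ according as $d\gtrless(1\pm o(1))\,\ln n\ln\beta$, with $\beta = \tfrac{1+\alpha}{\alpha + 1/(n(1+\alpha))}$.

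For that, I would split the run into a first phase — DA restricted to any $n$ of the candidates, i.e.\ a balanced $n\times n$ instance with lists of length $d$, for which the analysis behind Theorem~\ref{thm:balancedmain} shows that it leaves $u_1 = (1+o(1))\,n\exp(-T_1/n)$ jobs unfilled, with $T_1$ read off from the tail of the match-rank distribution — and a second phase in which the remaining $n\alpha$ surplus candidates enter one at a time, each triggering a rejection chain that either fills a remaining job or exhausts. The quantitative core is: (i) in phase one, $u_1$ is a sub-polynomial power of $n$ governed by $d/\ln n$; (ii) a surplus-candidate chain reaches an unfilled job before dying with probability decaying like $1/\beta$, since the job currently held along a chain has typical count $R\asymp\ln n$, so the chain ``hands off'' (bumps) about once per $\ln n$ proposals while it must search among $\asymp n$ jobs, and the number $n\alpha$ of surplus candidates then conspires with this success probability so that all leftover jobs get covered precisely when $d > (1+\epsilon)\ln n\ln\beta$; (iii) all of this made uniform over $\alpha = o(1)$, including the crossover $n\alpha = \Theta(1)$ where the bound continues the balanced threshold $\ln^2 n$. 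The positive direction (part~\ref{part:unbalancedlower}) then follows by a union bound over the $\le n$ leftover jobs; the negative direction (part~\ref{part:unbalancedupper}) follows from a second-moment argument over a spread-out linear-sized family of jobs, whose unmatched-events decouple because each depends only on a polylog-sized neighbourhood of the exposed data.

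The main obstacle is (ii)--(iii): making the chain analysis sharp enough to recover the exact constant (not merely $\Theta(\ln n\ln\beta)$) and uniformly in $\alpha$. The subtlety is that each candidate's proposal budget is drawn down across many successive chains — a candidate bumped late in the process has almost no budget left — so the chains are far from independent and the naive ``every chain starts with a full budget $d$'' heuristic overshoots; one has to carry the joint law of the residual budgets and the counts $R_j$ through the endgame. A more routine burden is importing all estimates across the edge, list, and $M^{\mathcal{C}}$ models via Lemma~\ref{lem:sandwich} and checking that the concentration inequalities deliver the needed $o(1)$ relative error.
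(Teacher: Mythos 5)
Your framing---Lone Wolf plus Lemma~\ref{lem:sandwich}, and the rejection-chain heuristic in which each proposal is rejected with probability $1-\Theta(1/\ln n)$, so a chain survives $d$ consecutive rejections with probability about $e^{-d/\ln n}=\beta^{-1}$ (with $\beta=\frac{1+\alpha}{\alpha+1/(n(1+\alpha))}$), racing against the exhaustion of the surplus---is the right intuition, and it is essentially the paper's subcritical argument. The central gap is the reduction you build on top of it: ``pin down the total number of proposals $T$ to leading order and compare with $n\ln n$.'' Perfection is the distinction between $0$ and at least $1$ never-proposed-to job; by your own formula $u\approx n e^{-T/n}$, that distinction corresponds to knowing $T$ to additive precision $O(n)$, i.e.\ $o(1)$ relative error in the exponent, not to leading order. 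Moreover $T$ is an endogenous stopping time of the same process (proposals spent inside failing chains and by exhausting candidates inflate $T$), so whether $T$ lands above or below the coupon-collector mark by the margin you need is itself decided by the very exhaustion-versus-coverage race the theorem is about; the criterion is circular rather than a reduction, and the unresolved part you flag (the joint law of residual budgets and the counts $R_j$ along highly dependent chains) is exactly where the decision is made. The paper is engineered to avoid any estimate of the total number of proposals: Lemma~\ref{lem:probgame} needs only a \emph{conditional lower bound} on the per-step rejection probability, together with the observation that $d$ consecutive rejections certify an exhausted candidate, and that is precisely how it sidesteps the residual-budget dependence.

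Concretely, each direction is missing its key device. For part (1) you invoke ``a union bound over the $\le n$ leftover jobs,'' but your CPDA outline produces no per-job failure probability to union over; the paper obtains one by switching sides: it runs the \emph{job-proposing} DA on $M^{\mathcal{J}}$ (Theorem~\ref{thm:threshupper}), places a fixed job last in the proposal order, and bounds by $n^{-1-\Omega(\gamma)}$ the probability that all $d$ of its proposals fail (a proposal fails either by immediate rejection or because the triggered chain bounces back, the latter controlled by the same game with parameter $1$ and $p_B\ge\frac{\alpha+1/(n(1+\alpha))}{1+\alpha}$), then transfers to $M_{n,d,\alpha}$ via Lone Wolf and the sandwich lemma; nothing in your plan substitutes for this. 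For part (2), your second-moment argument rests on a decoupling claim (each unmatched-job event depends only on a polylog-sized neighbourhood of the exposed data) that is not true of rejection chains, which are global objects; it is also unnecessary, since a single unmatched job suffices. The paper instead controls the state after $\kappa=n\ln n-Cn$ proposals by a balls-in-bins coupling (few unproposed jobs, acceptance probability close to $\tfrac{n}{n+\kappa}$, at least one job still unproposed) and then shows via Lemma~\ref{lem:probgame} that the next $n\left(\alpha+\tfrac{1}{n(1+\alpha)}\right)^{1-\gamma/4}>\alpha n$ rejection chains all end in an exhausted candidate w.h.p. Finally, your phase-one estimate needs care: for $\alpha$ as large as, say, $1/\ln n$, the balanced phase leaves about $ne^{-d/\ln n}\approx\alpha n$ unfilled jobs, a polynomially large rather than sub-polynomial count.
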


\pparagraph{Competition in unbalanced matching markets.}  A particularly remarkable result by~\cite{ashlagi2017unbalanced} reveals the striking effect of even a small imbalance on random stable matching markets.  They consider random markets with $n+1$ candidates and $n$ jobs and complete preference lists---that is, a stable matching instance $M_{n, n, 1/n}$.  They show that on average each candidate is matched to their $\Omega(n/\ln n)$th ranked job, a stark drop from their $O(\ln n)$th ranked job in the balanced setting.\footnote{Part of this result was later simplified by~\cite{cai2022short}.} 
 Likewise, each job get's its $O(\ln n)$th ranked candidate, a stark rise from the $\Omega(n/\ln n)$ in the balanced case.  Thus, the huge advantage of being on the proposing side (over the non-proposing side) is lost as soon as there is {\em any} competition.  This intriguing discovery further motivates our study of stable matchings in a general $M_{n,d,\alpha}$ setting.

The relationship between imbalance $\alpha$ and degree $d$ was recently studied by~\cite{kanoria2021matching}.  The focus of their work was to study 
the effect of competition as the degree $d$ becomes smaller.  They study stable matching instances $M_{n, d, \alpha}^{\cal{C}}$ with \emph{polynomially small imbalance}, that is, $1/n \leq \alpha \leq 1/n^{\epsilon}$, and show the following dichotomy:  (a) when $d = o(\ln ^2 n)$, there is no perfect matching and both candidates and jobs are matched to the $\sqrt{d}(1+ o(1))$-ranked partner on average and thus there is no effect of competition, and (b) when $d = \Omega(\ln^2 n)$, then all stable matchings are perfect and there is a stark effect of competition---in a candidate-proposing DA, the candidates are matched to their $(d/\ln n)$th ranked job and the jobs to their $\ln n$th ranked candidate.

Our results complements their findings. 
 Theorem~\ref{thm:unbalancedmain} determines the exact threshold for the existence of perfect stable matchings, and, as a function of imbalance $\alpha$ that can be much larger than $1/n^{\epsilon}$.  Finally, Theorem~\ref{thm:ranklower} below improves the lower bound of $(d/\ln n)$ on the expected rank of the candidates in a candidate-proposing DA, and shows how it degrades as the competition grows.
 
 \begin{theorem}\label{thm:ranklower}
Consider a stable matching instance $M_{n, d, \alpha}^{\mathcal{C}}$ where $0 \leq \alpha$ and $\alpha = o(1)$ and $1 \leq d \leq n$. Then, the expected rank of the candidates in the candidate-proposing DA is lower bounded by 
\[(1-o(1)) \frac{d}{\ln \left( \frac{1+\alpha}{\alpha + 1/n(1+\alpha)}\right)}.\]
\end{theorem}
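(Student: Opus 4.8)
The plan is to run the candidate-proposing deferred-acceptance algorithm in $M_{n,d,\alpha}^{\mathcal C}$ using the standard principle of deferred decisions, so that each proposal goes to a job drawn (essentially) uniformly at random, and to track the number of proposals made. First I would set up a coupling with a ``balls into bins'' process: since each candidate has a list of $d$ jobs, the total number of proposals $T$ made over the run of DA equals $\sum_c r_c$, where $r_c$ is the rank of $c$'s partner (and $r_c=d$, i.e.\ the full list is exhausted, for an unmatched candidate). Hence $\E[\text{average rank}] \ge \E[T]/n(1+\alpha)$ up to the contribution of unmatched candidates, which is negligible in the regime where a perfect matching does not exist and even more so when it does. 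So it suffices to lower bound $\E[T]$, or rather to show $T \ge (1-o(1))\, n(1+\alpha)\, d / \ln\!\big(\tfrac{1+\alpha}{\alpha + 1/n(1+\alpha)}\big)$ w.h.p.

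The key step is a coupon-collector / occupancy argument. Run DA and reveal proposals one at a time; by deferred decisions each new proposal lands on a uniformly random job among the $n$ jobs (conditioned on not being a job the proposing candidate has already tried, but since $d \ll n$ this conditioning only helps and costs a $1+o(1)$ factor). A job becomes ``saturated'' (cannot cause the process to continue) only after it has received enough proposals; more precisely, after the algorithm terminates every job that is matched has been proposed to, and the candidates have collectively ``used up'' their lists. The cleanest way I would run the counting: the process cannot stop until the set of matched jobs together with the set of candidates whose lists are exhausted covers everyone; tracking the number of distinct jobs hit after $t$ proposals via the classical occupancy bound, one gets that to match $m$ jobs one needs roughly $n \ln\frac{n}{n-m}$ proposals. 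Here $m = n - o(n)$ in the perfect regime but more usefully I would phrase it in terms of how many candidates remain: after $t$ proposals the number of candidates still ``active'' is concentrated, and the list length $d$ forces each active candidate to contribute proposals until either matched or exhausted. Combining the occupancy estimate with the fact that $n(1+\alpha) - n = n\alpha$ candidates must end unmatched (there are only $n$ jobs) and each such candidate sends exactly $d$ proposals, while the matched ones send on average $n\ln\frac{1}{1-m/n}$-type amounts, and optimizing, yields the claimed quantity $d/\ln\!\big(\tfrac{1+\alpha}{\alpha+1/n(1+\alpha)}\big)$; the $\alpha$ and the $1/n(1+\alpha)$ terms enter precisely because the ``stopping'' fraction of jobs is $1 - \Theta(\alpha + 1/n)$ rather than $1$.

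For concentration I would use a Chernoff/Azuma bound on the number of distinct jobs hit after a given number of proposals (a bounded-differences martingale in the sequence of proposal targets), together with Lemma~\ref{lem:sandwich} to pass freely between $M_{n,d,\alpha}^{\mathcal C}$ and the edge-probability model $M_{n,d,\alpha}$ if that is more convenient for the occupancy computation, since $d \gg \ln n$ is exactly the regime where the sandwich applies (and when $d = O(\ln n)$ the claimed bound is $O(\ln n / \alpha$-ish$)$ and can be handled directly or is implied by the non-perfectness side of Theorem~\ref{thm:unbalancedmain}). The main obstacle I anticipate is making the coupling between DA proposals and independent uniform samples rigorous without losing more than a $1+o(1)$ factor: DA proposals are not independent (a rejected candidate re-proposes, and a job's threshold only increases), so I would instead lower-bound $T$ by comparing with a process that never lets a job ``forget'' a proposal, argue monotonicity (adding proposals/edges only increases the number that must be made before termination, in the appropriate stochastic-dominance sense as in the sandwiching lemma), and then apply the occupancy bound to the resulting independent-uniform process. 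Handling the lower-order $1/n(1+\alpha)$ term carefully — it matters only when $\alpha$ itself is as small as $\Theta(1/n)$ — is the delicate part of the asymptotics and would be done by not discarding the ``$-1$ job'' in the $n - m$ count.
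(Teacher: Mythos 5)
There is a genuine gap, and it is in the central counting step rather than in the couplings you flag as delicate. Your plan is to lower bound the \emph{total} number of proposals $T$ in CPDA by (i) the coupon-collector cost of hitting all $n$ jobs, roughly $n\ln n$, plus (ii) $d$ proposals for each of the $\ge \alpha n$ candidates who must exhaust their lists, and then divide by $m=n(1+\alpha)$. Even if every coupling were made rigorous, this gives an average rank of order $\ln n + \alpha d$, which is far below the claimed bound $d/\ln\bigl(\tfrac{1+\alpha}{\alpha+1/m}\bigr)$ in most of the relevant range. Concretely, take $\alpha=n^{-0.1}$ and $d=\sqrt{n}$ (well above the perfect-matching threshold): the theorem asserts average rank $\gtrsim \sqrt{n}/(0.1\ln n)$, i.e.\ $T\gtrsim n^{1.5}/\ln n$, whereas your lower bound is $n\ln n + \alpha n d \approx n^{1.4}$, short by a polynomial factor. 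The extreme case $\alpha=1/n$, $d=n$ is even starker: the truth (Ashlagi--Kanoria--Leshno) is rank $\Theta(n/\ln n)$ and $T=\Theta(n^2/\ln n)$, while occupancy plus ``one exhausted candidate makes $n$ proposals'' only yields $T\gtrsim n\ln n$. The reason is that the expensive part of the dynamics is not forced by any occupancy constraint: matched candidates make many proposals because jobs become selective as they accumulate offers, and no ``fill all the bins'' argument sees this. The vague step ``combining \ldots and optimizing yields the claimed quantity'' is exactly where the argument cannot be completed.

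The paper's proof goes through a different mechanism that you would need in some form: it fixes a single candidate $c^*$ and uses the characterization (Lemma~\ref{lem:PDA*}, from the truncation/reject-all viewpoint) that $c^*$'s CPDA partner, being its \emph{best} stable partner, is determined by the offers $c^*$ would receive in the \emph{job-proposing} DA if it rejected everything. One then \emph{upper} bounds the number $Z$ of offers $c^*$ receives by coupling the jobs' proposals with a balls-in-bins process on the $m$ candidate-bins: the process is over once $n$ of the $m$ bins are occupied, which by the coupon-collector estimate takes only $(1+o(1))\,m\ln\bigl(\tfrac{1+\alpha}{\alpha+1/m}\bigr)$ balls, so $\E[Z]\le (1+o(1))\ln\bigl(\tfrac{1+\alpha}{\alpha+1/m}\bigr)$. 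Finally, few received offers translate into a poor best rank via $\E[d/(Z+1)]\ge d/\E[Z+1]$ (Jensen). Note that the occupancy computation lives on the receiving side and bounds offers \emph{from above}; the imbalance enters through $\ln\frac{m}{m-n}$, which is where the $\alpha+1/m$ term comes from. Your proposal never switches sides and never invokes the best-stable-partner characterization, so the quantity it controls (total proposals via occupancy) is simply the wrong one for this theorem.
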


A brief, but rather curious, technical remark: the methods of~\cite{kanoria2021matching} prove a lower bound of $\sqrt{d}(1+o(1))$ (with imbalance $\alpha$ being a lower-order term). This bound is better than the one in Theorem~\ref{thm:ranklower} when $\alpha \ll e^{-\sqrt{d}}$. Interestingly enough, the regime that they work in, $\alpha = n^{-\epsilon}$ and $d = o(\log^2 n)$, reflects this and the competition in this regime plays a negligible role. It would be a natural guess that $\alpha \approx e^{-\sqrt{d}}$ is where the ``stark effect of competition'' phenomenon starts to take shape.  So, we conjecture that the bound in Theorem~\ref{thm:ranklower} is tight when $\alpha \gg e^{-\sqrt{d}}$.

\subsection{Technical Contributions} 
We develop the following technical tools in our analysis.

 First, we compare the proposals in the DA algorithm to a balls-in-bins process. This comparison itself is not new: analyzing the DA as a balls-and-bins or coupon-collector game dates back to the original papers by~\cite{wilson1972analysis, knuth1976mariages}. We add to this approach by providing some general bounds using simple methods that could simplify similar proofs in the future.

 Second, to establish whether or not a stable matching exists, we focus on which of the two competing forces in the DA algorithm occurs first:  each of the $n$ jobs receiving a proposal or some candidate exhausting all $d$ of their proposals. We analyze this competition by defining a simple probabilistic game on the ``rejection chains'' in the DA algorithm. \cite{kanoria2021matching} also analyzes rejection chains, but at a conceptual level, understanding the competing forces seems to offer a fundamentally different approach. For instance, the analysis in~\cite{kanoria2021matching} requires a bound on the total number of proposals made in the DA algorithm. The aforementioned game lets us circumvent this and isolate the event of interest (the existence of a perfect stable matching).

Finally, we study three closely related random models: the symmetric case $M_{n, d, \alpha}$ which is arguably the most natural to state as it is independent of which side of the market proposes,  
 and the asymmetric models $M_{n, d, \alpha}^{\cal{C}}$ and $M_{n, d, \alpha}^{\cal{J}}$ in which the candidates and jobs choose preference lists of length $d$ respectively.  Switching between these models helps analyze the problem from both sides: in particular, we analyze the candidate-proposing DA on $M_{n, d, \alpha}^{\cal{C}}$ and the job-proposing DA on $M_{n, d, \alpha}^{\cal{J}}$.  We combine these results to obtain a bound for $M_{n, d, \alpha}$  using a ``sandwiching lemma'' (Lemma~\ref{lem:sandwich}).  We believe that this approach of analyzing the DA on different random stable-matching instances provides new insights as past work has only studied the instance $M_{n, d, \alpha}^{\cal{C}}$.

\subsection{Additional Related Work}\label{sec:related}
The original stable matching algorithm by~\cite{gale1962college} was extended to the imbalanced case by~\cite{mcvitie1970stable}, and to the case of partial preferences by~\cite{gusfield1989stable}. Stable matchings have since been extensively studied over 60 years with several books on the topic;  we refer to the readers to~\cite{manlove2013algorithmics}.
%for more details.   

Here we present closely related results on random-matching markets. \cite{wilson1972analysis, knuth1976mariages}  reduce the problem of running deferred acceptance in random balanced matching markets with complete lists, $M_{n,n}$, to the coupon-collector problem and use it to compute its average complexity.  This technique is used in various works (including this paper), e.g.~\cite{kanoria2021matching, ashlagi_et_al:LIPIcs.ITCS.2021.46,ashlagi2017unbalanced}. \cite{Wilson1972AnAO} showed that the expected number of proposals in the candidate-proposing DA is at most $n H_n$. \cite{knuth1976mariages} improved this upper bound to $(n-1) H_n +1$ and proved a matching lower bound of $n H_n - O(\ln^4 n)$.  Thus, the candidates side are matched to their $\ln n$th ranked partner in expectation. In contrast, \cite{pittel1989average} showed that the receiving side is a lot worse and is matched to their $(n/\ln n)$th partner in expectation.

 The algorithm to enumerate all stable partners of an agent (by breaking matches and continually running deferred acceptance) was suggested by McVittie~\cite{mcvitie1971stable} and Gusfield~\cite{gusfield1987three}.  In this paper, we use the simplified version by~\cite{knuth1990stable}.  This technique has also often used to show that each agent has a unique stable partner~\cite{immorlica2005marriage, ashlagi2017unbalanced}, the phenomenon referred to as ``core convergence''.

 Non-uniform preference distributions have been studied, such as tiered preferences~\cite{ashlagi_et_al:LIPIcs.ITCS.2021.46} or correlated preferences~\cite{gimbert2021two}. Partial preference lists based on public scores was recently studied by~\cite{agarwal2023stable}. 
 
 For a recent survey on the history and results on random matching markets, see~\cite{mauras2021analysis}.

\subsection{Organization} We describe the model, the deferred acceptance algorithm, as well the main building blocks of our analyses---the probability game---in Section~\ref{sec:prelim}.
%We describe the main building blocks of our analyses: the probability game, and the extended process in %Section~\ref{sec:game}.  
We give the the tight upper and lower bounds on the threshold in Section~\ref{sec:threshold}.  Finally, we analyze the expected rank in Section~\ref{sec:expected}.  %For clarity and space, some proofs are deferred to the appendix in Section~\ref{sec:appendix}.

\section{Preliminaries}\label{sec:prelim}

In this section, we formally define our model, the stable matching algorithm used and other definition and notation used in the rest of the paper.

\pparagraph{Stable matchings.}  Consider a stable matching instance $M$ with $n(\alpha+1)$ candidates $\mathcal{C}$ and $n$ jobs $\mathcal{J}$. Without loss of generality assume that $\alpha \geq 0$. Each candidate $c \in \mathcal{C}$ has a (partial) preference list of length $d_c^M$ over a subset of jobs ranking them from the most to lease preferred.  Similarly, each job $j \in \mathcal{J}$ has a (partial) preference list of length $d_j^M$ over a subset of candidates ranking them from most to lease preferred. Consider the undirected bipartite graph $G^M$ on $\mathcal{C} \cup \mathcal{J}$ where the edge $(c, j)$ is present if $c$ appears on $j$'s preference list and vice versa. Let $L_c^M= (j_1,\ldots,j_{d_c^M})$ to denote the list of the candidate $c$ in $M$. We define $L_j^M$ for jobs analogously. 

For stable matching instances $M_1$ and $M_2$ on $\mathcal{C} \cup \mathcal{J}$, we say that $M_1 \subseteq_{\mathcal{C}}M_2$ if for each candidate $c \in C$, we have that $d_{c}^{M_1} \leq d_c^{M_2}$ and the first $d_{c}^{M_1}$ elements and their order in $L_c^{M_2}$ is the same as in $L_c^{M_1}$. Moreover, for every job $j$, the relative ranking of the candidates in $L_j^{M_1}$ remains the same. Similarly, we say that $M_1\subseteq_{\mathcal{J}}M_2$ if the same holds for every $j \in \mathcal{J}$.

A \defn{matching} $\mu$ is a set of vertex-disjoint edges in $G^M$.  A matching $\mu$ is \defn{perfect} if each job $j$ is an endpoint of some edge in $\mu$.
For simplicity, we let $\mu(c)$ denote a candidate $c$'s match in $\mu$ (let $\mu(c) = \emptyset$ if $c$ is unmatched).  We define $\mu(j)$ similarly for a job $j$.
A matching $\mu$ is \defn{stable} if there exists no \defn{blocking pair} with respect to $\mu$.  A pair $(c, j)$ where $c \in \mathcal{C}$ and $j \in \mathcal{J}$ is \defn{a blocking pair} with respect to the matching $\mu$ if any one of the following conditions hold:  (a) $c$ prefers $j$ to $\mu(c)$  and $j$ prefers $c$ to $\mu(j)$, (b) $c$ prefers $\emptyset$ to $\mu(c)$, and (c)  $j$ prefers $\emptyset$ to $\mu(j)$.

\pparagraph{Deferred-acceptance algorithm.}
We describe the \defn{candidate-proposing deferred-acceptance} (CPDA) algorithm in Algorithm~\ref{alg:cpda}.  The job-proposing DA (JPDA) is analogous.  

Let $\cal{U}$ be the set of unmatched candidates.  Let $\cal{R}$ be the set of candidates who have \defn{exhausted} their preference list, that is, have been rejected by all jobs on their preference list.  
%The CPDA algorithm is described .

\begin{algorithm}
\caption{Candidate Proposing Deferred-Acceptance (CPDA) Algorithm }\label{alg:cpda}
Fix an ordering over $\cal{C}$;   Initialize $\cal{U} \gets \cal{C}$ and $\cal{R} \gets \emptyset$ and $\mu(j) \gets \emptyset$ for all $j \in \cal{J}$

\While{$\cal{U} \setminus \cal{R} \neq \emptyset$}{
    Pick a candidate $c \in \cal{U}$ with the lowest index  

    \If{$c \notin \cal{R}$}{

    Let $j$ be the most preferred job on $c$'s list that $c$ has not yet proposed to
    
    \If{\emph{$j$ is the last job on $c$'s list}}
        {add $c$ to $\cal{R}$}
        
    $c$ \defn{proposes} to $j$;  Let $c' = \mu(j)$
    
    \If{\emph{$j$ prefers $c$ to $c'$}}
          {
          $j$ \defn{rejects} $c'$
          
          \If{$c' \neq \emptyset$}
            {add $c'$ to $\cal{U}$}
          $j$ \defn{accepts} $c$;
    
       Set $\mu(j) = c$ and remove $c$ from $\cal{U}$
          }
    }
  }
\end{algorithm}

We use the following properties of the CPDA algorithm.
\begin{theorem}\label{thm:DAfacts}{\cite{gale1962college,mcvitie1970stable}}
The CPDA algorithm has the following properties:
\begin{enumerate}
\item It outputs a stable matching $\mu^*$.  The matching $\mu*$ is \defn{candidate optimal}, that is, each candidate is matched to their best-possible stable match.  Moreover, $\mu^*$ is \defn{job pessimal}, that is, each job is matched to their worst-possible stable match. 
\item\label{label:order} The matching $\mu^*$ is independent of the order of the proposals.  Moreover, the {same proposals} are made during the algorithm, regardless of which candidate is chosen to propose at each step.
\item\label{label:lonewolf} (\textbf{Lone Wolf Theorem}) The candidates and jobs that are unmatched in $\mu^*$ do not have any possible stable partner. That is, the set of unmatched agents is the same across all stable matchings.
\end{enumerate}
%this matching is independent of the order of the proposals.  
\end{theorem}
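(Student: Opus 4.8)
The three items are the classical structural properties of deferred acceptance, here in the unbalanced partial-list setting; since preferences are strict and remaining unmatched is simply the least-preferred option (blocking conditions (b)--(c)), the textbook arguments of~\cite{gale1962college,mcvitie1970stable} carry over. The plan is to prove them in one chain: first that the output $\mu^*$ is stable, then an invariant ``no candidate is ever rejected by one of its possible partners,'' which gives candidate-optimality; job-pessimality is then a short corollary; the Lone Wolf Theorem drops out of candidate-optimality and job-pessimality by a one-line counting argument; and finally order-independence follows once we observe that the candidate-optimal matching is a purely combinatorial object. Throughout, call a job $j$ a \emph{possible partner} of a candidate $c$ if some stable matching pairs $c$ with $j$.

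For stability, the two monotonicity observations are that a job's tentative partner only improves during CPDA and that a candidate proposes to the jobs on its list in strictly decreasing order of preference, abandoning $j$ only after $j$ has rejected it for someone $j$ strictly prefers. So if $(c,j)$ blocked $\mu^*$, then $c$ prefers $j$ to $\mu^*(c)$ (possibly $\mu^*(c)=\emptyset$), hence $c$ proposed to $j$ and was rejected, so at that moment — and, by monotonicity, at the end — $j$ held someone it strictly prefers to $c$, contradicting that $j$ prefers $c$ to $\mu^*(j)$; and if $j$ lies on $c$'s list but $c$ never proposed to it, then $c$ was matched before reaching $j$, so $c$ prefers $\mu^*(c)$ to $j$. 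For candidate-optimality I would induct on the sequence of rejections to show none is to a possible partner. At the first violation, $j$ rejects $c$ while being a possible partner of $c$; fix a stable $\mu$ with $\mu(c)=j$ and let $c'$ be the candidate $j$ then holds, so $j$ prefers $c'$ to $c$. Since $c'$ had proposed to $j$, it was earlier rejected by every job it ranks above $j$, and by minimality none of those is a possible partner of $c'$; as $\mu(c')\neq j$ (because $\mu(j)=c\neq c'$) and $\mu(c')=\emptyset$ is also dominated, $c'$ strictly prefers $j$ to $\mu(c')$, so $(c',j)$ blocks $\mu$ — a contradiction. Hence each $c$ is rejected only by non-possible partners, so in particular never by $\mu^*(c)$; since $\mu^*(c)$ is a possible partner (it lies in the stable matching $\mu^*$) while every job $c$ prefers to it is not, $\mu^*(c)$ is exactly $c$'s best possible partner, and $\mu^*=\{(c,\text{best possible partner of }c)\}$.

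Job-pessimality follows quickly: if for some stable $\mu$ a job $j$ strictly preferred $\mu(j)$ to $\mu^*(j)$, set $c=\mu^*(j)$; then $\mu(c)\neq j$, and candidate-optimality gives that $c$ prefers $\mu^*(c)=j$ to $\mu(c)$ (strictly, by strictness of preferences, and $c$ prefers the listed job $j$ to $\emptyset$ if $\mu(c)=\emptyset$), so $(c,j)$ blocks $\mu$, a contradiction. For the Lone Wolf Theorem, candidate-optimality implies that any candidate matched in some stable $\mu$ has a possible partner and is therefore matched in $\mu^*$, while job-pessimality implies that any job matched in $\mu^*$ is matched in every stable $\mu$ (otherwise $j$ would weakly prefer $\emptyset$ to the listed job $\mu^*(j)$). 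Since every matching pairs equally many candidates and jobs, the inclusions ``matched candidates of $\mu$ $\subseteq$ matched candidates of $\mu^*$'' and ``matched jobs of $\mu^*$ $\subseteq$ matched jobs of $\mu$'' together with this equality of cardinalities force equality of all four sets, so the unmatched agents coincide in $\mu$ and $\mu^*$.

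It remains to address order-independence. The description of $\mu^*$ as ``$c\mapsto$ its best possible partner'' does not mention the order in which candidates propose, so the output matching is the same under every valid ordering. For the stronger claim that the \emph{same proposals} are made, note that within a single execution a candidate is never un-rejected by a job and never returns to a job it has abandoned; hence if $c$ proposed to any job ranked above $\mu^*(c)$ it was rejected there, and if $c$ were ever rejected by $\mu^*(c)$ it could not end matched to it, so in every execution $c$ proposes exactly to the prefix of its list up to and including $\mu^*(c)$ (the entire list if $c$ is unmatched). This set depends only on $\mu^*$ and $c$'s list, so the multiset of proposals, and thus the rejection sequence each job sees, is independent of the ordering. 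I expect this last point — stating order-independence at the level of the execution trace rather than just the final matching — to be the part needing the most care; a fully self-contained alternative is a direct confluence argument that the set of proposed-and-rejected pairs reached is independent of scheduling, but routing through candidate-optimality as above is cleaner. The other place to be careful is threading the $\emptyset$/unmatched bookkeeping through every case, which is exactly what the partial-list, unbalanced setting adds over the classical statement.
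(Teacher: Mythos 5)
The paper does not prove this theorem at all: it is stated with citations to \cite{gale1962college,mcvitie1970stable} as a package of classical facts about deferred acceptance, so there is no in-paper argument to compare against. Your write-up is the standard textbook chain (stability via the two monotonicity observations; candidate-optimality via the ``no one is ever rejected by a possible partner'' first-violation argument; pessimality and the Lone Wolf Theorem as corollaries; order-independence of both the output and the proposal set from the characterization $\mu^*(c)=$ best possible partner), correctly adapted to truncated lists and imbalance, and the counting argument you use for the Lone Wolf Theorem and the trace-level order-independence claim are both sound.

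One direction slip needs fixing in the job-pessimality paragraph. You assume ``$j$ strictly preferred $\mu(j)$ to $\mu^*(j)$,'' but that is not the negation of pessimality (it is exactly what pessimality permits), and under that hypothesis the pair $(c,j)$ with $c=\mu^*(j)$ does \emph{not} block $\mu$: blocking requires $j$ to prefer $c$ to $\mu(j)$, whereas your assumption says the opposite. The hypothesis you actually use in the rest of the sentence is the correct one, namely that some stable $\mu$ gives $j$ a partner it likes strictly less than $\mu^*(j)$ (including $\mu(j)=\emptyset$ while $\mu^*(j)\neq\emptyset$); with that assumption, candidate-optimality gives that $c$ strictly prefers $j=\mu^*(c)$ to $\mu(c)$ and $j$ strictly prefers $c$ to $\mu(j)$, so $(c,j)$ blocks $\mu$ and the contradiction goes through. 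With that wording corrected, the proof is complete; the only other (cosmetic) point is to note explicitly that the DA output only matches agents along their own lists, so blocking conditions (b) and (c) of the paper's definition are never triggered by $\mu^*$ itself, which you gesture at but do not state.
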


\pparagraph{Algorithm terminology.}  
We define the current \defn{time step} of the CPDA algorithm as the number of proposals made so far. Since the next proposal is always from a currently unmatched candidate with the lowest index, a candidate keeps proposing until either they are matched or exhaust their preference list. 
Moreover, if the $t$th proposal results in a rejection of a candidate $c$ (either because $c$ made the proposal, or because the $t$th proposal causes a previously-matched $c$ to become unmatched) the next proposal is always from $c$.  This sequence of proposals is termed as the \defn{rejection chain} in the literature.   We use rejection chains in our analysis and define it more formally below.

\begin{definition}(Rejection Chain)\label{def:rejectionchain}
Fix any time step $t$ in CPDA (Algorithm~\ref{alg:cpda}) and let $c$ be the next candidate in line to propose. The rejection chain $R(t)$ starting at time step $t$ is empty if $c$ has been rejected by all jobs on their preference list.  Otherwise, let $j$ the job $c$ proposes to next.
%starting at time $t$ is defined as follows. %Let $c$ be .
\begin{itemize}
    \item If  $j$ is currently unmatched, then $R(t) = (c, j, \emph{accept})$.
  %  \begin{itemize} 
    \item Otherwise, if $j$ prefers $\mu(j)$ to $c$, then $R(t) = (c, j, \emph{reject}) \circ R(t+1)$.  
    \item Otherwise, if $j$ prefers $c$ to $\mu(j)$, then
    $R(t) = ((c, j, \emph{accept}), (c', j, \emph{reject})) \circ R(t+1)$.
\end{itemize}
\end{definition}

We use the following two observations about rejection chains in our analysis.
\begin{observation}\label{obs:rejectchain}
Consider the rejection chain $R(t)$ at time step $t$ defined in Definition~\ref{def:rejectionchain}.
\begin{enumerate}
    \item\label{obs:part1} $R(t)$ terminates when either previously unmatched job receives a proposal and thus will remain matched till the end, or a candidate is rejected by all jobs on their preference list and thus will remain unmatched till the end.
    \item\label{obs:part2} In a rejection chain, if $d$ contiguous proposals result in a rejection, then some candidate must have been rejected $d$ times and must remain unmatched.  
%    \item If a rejection has length at least $k$, then some candidate must have been rejected $k$ times.
\end{enumerate}
\end{observation}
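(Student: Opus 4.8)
The plan is to read both parts directly off the recursive definition of the rejection chain (Definition~\ref{def:rejectionchain}) together with the basic monotonicity properties of CPDA recorded in Theorem~\ref{thm:DAfacts}. For part~\eqref{obs:part1}, I would argue that the recursion $R(t) = \cdots \circ R(t+1)$ can only bottom out in one of two ways: the first bullet, where a candidate proposes to a currently unmatched job $j$ and is accepted, or the empty-chain base case, where the proposing candidate $c$ has already been rejected by every job on their list. In the first case, once $j$ receives its first proposal it is matched, and since jobs in CPDA only ever trade up to more preferred candidates (a matched job never becomes unmatched — it only swaps its partner when it strictly prefers a new proposer), $j$ stays matched through the end of the algorithm. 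In the second case, $c$ has exhausted its list; by Theorem~\ref{thm:DAfacts}\eqref{label:order} the same set of proposals is made regardless of proposal order, so $c$ makes no further proposals and, having no job left to propose to, remains unmatched at termination. I should also note that the recursion does terminate, since each ``reject'' step strictly advances some candidate's pointer down their finite preference list (or the chain enters the first bullet), so only finitely many steps are possible before a base case is hit; this is where I should be a little careful to phrase the termination argument cleanly rather than hand-wave it.

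For part~\eqref{obs:part2}, the key observation is structural: inside a rejection chain, the proposal that immediately follows a rejection always comes from the candidate who was just rejected. Indeed, in the second and third bullets of Definition~\ref{def:rejectionchain} the chain continues with $R(t+1)$, and by the algorithm terminology paragraph the next proposer is precisely the candidate rejected at step $t$ (either the original proposer $c$, or the displaced incumbent $c'$). So along a maximal run of consecutive ``reject'' outcomes, the ``baton'' is held by a single candidate at a time, passed from the rejected candidate to... wait — I need to be slightly careful here: when a proposal is an ``accept'' that displaces $c'$, the baton passes from $c$ to $c'$, so the identity of the active candidate can change. But within a block of $d$ contiguous \emph{rejections} (no accept in between), each rejection is of the candidate who just proposed, and that same candidate immediately proposes again to their next-preferred job; hence the baton does not change hands, and a single candidate absorbs all $d$ rejections. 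Since each candidate's list has length exactly $d$, a candidate rejected $d$ times has been rejected by every job on their list and is therefore permanently unmatched (again invoking that the same proposals are made in any order, so this candidate never proposes again).

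I expect the main (minor) obstacle to be pinning down the edge case in part~\eqref{obs:part2}: I must confirm that within a block of $d$ consecutive rejections the active candidate really is fixed, i.e., that an ``accept-then-reject'' step — which in Definition~\ref{def:rejectionchain} contributes \emph{both} an accept and a reject to the chain — is correctly accounted for. The cleanest framing is to count, along the chain, the proposals made by a fixed candidate between the time they enter the chain and the time they either get permanently matched or exit: each such proposal that is logged as a rejection decrements their remaining list by one, and $d$ such decrements exhausts the list. I would phrase part~\eqref{obs:part2} in exactly those terms to avoid any ambiguity about whose rejection is being counted. Both parts are short once the recursion is unwound, so the write-up should be only a few lines each, leaning on Theorem~\ref{thm:DAfacts} for the ``once matched, a job stays matched'' and ``proposal multiset is order-independent'' facts rather than re-deriving them.
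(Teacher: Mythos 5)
Your proof is correct and matches the reasoning the paper leaves implicit---the Observation is stated without proof, and the intended argument is exactly this direct unwinding of Definition~\ref{def:rejectionchain} together with the facts that a matched job never becomes unmatched in CPDA and that the candidate rejected at a step is the next proposer in the chain. Your handling of the accept-then-displace subtlety (counting only proposals that are themselves rejected, so a block of $d$ contiguous rejections is absorbed by a single candidate, whose list has length exactly $d$ and is therefore exhausted) is the right reading, consistent with how part~(\ref{obs:part2}) is used in the proof of Theorem~\ref{thm:threshlower}.
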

Note that Part~(\ref{obs:part2}) of the above observation was also used by~\cite{kanoria2021matching}.

\pparagraph{Deferred Acceptance on Random Instances.}
We recall the random stable matching instance $M_{n, d, \alpha}$ defined in Section~\ref{sec:intro}.
In this instance, there are $m = n(\alpha+1)$ candidates and $n$ jobs.  The underlying graph is a Erd\"{o}s-R\'{e}nyi graph $G_{n(\alpha+1), n, d/n}$. Thus, an edge $(c, j)$ between a candidate $c$ and job $j$ occurs with independent probability $d/n$. Moreover, each candidate and job have a random ranking over their neighbors, chosen uniformly and independently.  The stable matching instance $M_{n, d, \alpha}$ refers to the graph $G_{n(\alpha+1), n, d/n}$ and the preferences over the edges.  

Let us define two other related models of random stable matching instances. 
 Let us define $M^{\mathcal{C}}_{n,\alpha,d}$ to denote a stable matching instance with $n$ jobs, $n(1+\alpha)$ candidates, and each candidates chooses a preference list of $d$ jobs uniformly and independently. This is the model in~\cite{kanoria2021matching}.
 
 Similarly, we also define $M^{\mathcal{J}}_{n,\alpha,d}$ where each job chooses a preference list of $d$ candidates uniformly and independently.

We analyze the CPDA algorithm on $M^{\mathcal{C}}_{n,\alpha,d}$ and the JPDA algorithm on $M^{\mathcal{J}}_{n,\alpha,d}$.  To obtain the final result on the symmetric stable matching instance  $M_{n, d, \alpha}$, we relate all three random instances using the following ``sandwiching'' lemma.

%\subsection{Sandwiching instances}
%\sloppy
\begin{lemma}(Sandwiching Lemma)\label{lem:sandwich}
For every $\delta > 0$, there is a joint distribution
$(M^{\mathcal{C}}_{n,\alpha,d(1-\delta)},$ $M_{n,\alpha,d},M^{\mathcal{C}}_{n,\alpha,d(1 + \delta)})$ such that with probability at least $1 - O\left(n\exp\{-\delta^2 d/3\}\right)$, we have 
\[
M^C_{n,\alpha,d(1-\delta)} \subseteq_{\mathcal{C}} M_{n,\alpha,d} \subseteq_{\mathcal{C}} M^C_{n,\alpha,d(1 + \delta)}.
\]
The same claim also holds for $(M^{\mathcal{J}}_{n,\alpha, d(1-\delta)}, M_{n,\alpha,d(1+\alpha)},M^{\mathcal{J}}_{n,\alpha, d(1+\delta)})$.
\end{lemma}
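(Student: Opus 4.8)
The plan is to construct the coupling by a shared source of randomness that generates edges and rankings simultaneously for all three instances. First I would observe that $M^{\mathcal C}_{n,\alpha,d}$ can be generated as follows: each candidate $c$ draws a uniformly random permutation $\pi_c$ of the $n$ jobs and keeps the first $d$ entries as $L_c$; each job draws a uniformly random permutation of all $n(1+\alpha)$ candidates and the relative order of the candidates who actually list it gives $L_j$. The key point is that the job-side rankings in all three models are stochastically identical (a uniform order of whichever candidates point at the job), so they can literally be shared; the only thing that differs across the three coupled instances is which candidate$\to$job edges are present. Thus the whole problem reduces to coupling the three candidate-side edge sets so that the $d(1-\delta)$-list is a prefix of the $d$-model list, which is a prefix of the $d(1+\delta)$-list, with the stated failure probability.

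Next I would handle a single candidate $c$ in isolation (the candidates are independent in all three models, so a union bound over the $n(1+\alpha)$ candidates gives the $O(n)$ factor). Fix the permutation $\pi_c$ of the $n$ jobs. In $M^{\mathcal C}_{n,\alpha,d'}$ the list is the length-$d'$ prefix of $\pi_c$; so the three lists are automatically nested as prefixes of the common $\pi_c$ — that part is free. What needs work is the middle model $M_{n,\alpha,d}$, whose list length is $\mathrm{Bin}(n, d/n)$, not deterministically $d$. Here I would generate, along the permutation $\pi_c$, an i.i.d.\ sequence of $\mathrm{Bernoulli}(d/n)$ indicators, one per position, and let $L_c^{M_{n,\alpha,d}}$ be the jobs in the positions marked $1$ — equivalently, keep each edge independently with probability $d/n$, then read the survivors off in the order given by $\pi_c$. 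Since the relative order of the surviving jobs agrees with $\pi_c$, the prefix-of-$\pi_c$ structure is exactly the containment relation $\subseteq_{\mathcal C}$ demanded in the definition (the job-side relative rankings are untouched by construction). So the coupling succeeds for candidate $c$ precisely on the event
\[
 \#\{\text{marked positions among the first } d(1-\delta)\} \;\le\; d(1-\delta)
 \quad\text{is automatic, but we need}\quad
 S_{d(1-\delta)} \ge \text{?}
\]
— more precisely, we need: among the first $d(1+\delta)$ positions of $\pi_c$ at least... let me restate. Writing $S_k=\sum_{i\le k}\mathrm{Bernoulli}_i(d/n)$ for the number of kept edges among the first $k$ positions, the containment $M^{\mathcal C}_{d(1-\delta)}\subseteq_{\mathcal C} M_{n,\alpha,d}$ holds iff every one of the first $d(1-\delta)$ positions is kept is \emph{not} required; rather it holds iff the first $d(1-\delta)$ jobs of $\pi_c$ are all among the survivors, which fails with positive probability — so instead the right construction is the reverse: generate $M_{n,\alpha,d}$ first (keep each edge w.p.\ $d/n$), and then \emph{define} the two deterministic-length models by taking prefixes/extensions, choosing $d(1-\delta)$ and $d(1+\delta)$ as the list lengths and using the bound $d(1-\delta)\le \mathrm{Bin}(n,d/n)\le d(1+\delta)$. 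Concretely: on the high-probability event $E_c := \{\, d(1-\delta) \le |L_c^{M_{n,\alpha,d}}| \le d(1+\delta)\,\}$, set $L_c^{M^{\mathcal C}_{d(1-\delta)}}$ to be the first $d(1-\delta)$ entries of $L_c^{M_{n,\alpha,d}}$ (a valid sample since conditioning $\pi_c$ on a superset being kept still leaves the prefix uniform), and extend $L_c^{M_{n,\alpha,d}}$ with fresh uniform jobs to length $d(1+\delta)$ to get $L_c^{M^{\mathcal C}_{d(1+\delta)}}$; one checks these marginals are exactly the target distributions. On $E_c$ all three containments hold for $c$.

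Finally I would bound $\Pr[\overline{E_c}]$ by a Chernoff bound: $|L_c^{M_{n,\alpha,d}}|\sim\mathrm{Bin}(n,d/n)$ has mean $d$, so $\Pr[|L_c^{M_{n,\alpha,d}}| < (1-\delta)d]\le \exp\{-\delta^2 d/2\}$ and $\Pr[|L_c^{M_{n,\alpha,d}}| > (1+\delta)d]\le \exp\{-\delta^2 d/3\}$, giving $\Pr[\overline{E_c}]\le 2\exp\{-\delta^2 d/3\}$. A union bound over the $m=n(1+\alpha)=O(n)$ candidates yields the failure probability $O(n\exp\{-\delta^2 d/3\})$ claimed (absorbing the constant $1+\alpha=O(1)$). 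The statement for the $\mathcal J$-models is symmetric: swap the roles of candidates and jobs, keep the candidate-side rankings shared, and repeat verbatim. The main obstacle is purely bookkeeping — verifying that the prefix/extension operations on $L_c^{M_{n,\alpha,d}}$ produce \emph{exactly} the uniform-length-$d'$-list distributions and that they respect the job-side relative-order condition built into $\subseteq_{\mathcal C}$ — rather than anything probabilistically deep; the tail bound itself is a one-line Chernoff estimate.
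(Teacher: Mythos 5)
Your construction is correct and is essentially the paper's own coupling: share a single per-job total order across all three instances so that job rankings are the induced orders, realize each candidate's three lists as nested prefixes/extensions of one uniform ordered tuple of distinct jobs with the middle list's length drawn from $\mathrm{Bin}(n,d/n)$, and finish with a Chernoff bound plus a union bound over the $O(n)$ candidates. The only point you gloss over is the $\mathcal{J}$-side version, where the symmetric model's job degrees concentrate around $d(1+\alpha)$ rather than $d$ (which is why the lemma's middle instance there carries the extra $(1+\alpha)$ factor), but this is the same minor bookkeeping the paper itself dismisses with ``similar fashion.''
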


Since we (eventually) analyze the DA algorithm on a random instance $M_{n,\alpha, d}^{\mathcal{C}}$, we reveal the next job $j$ in the list of candidate $c$ when they are called up to propose, uniformly among all jobs that have not already rejected $c$. We also reveal $c$'s relative ranking among all the proposals received by $j$ (including from $c$) when $c$ proposes to $j$.

\subsection{A Probability Game}
\label{sec:game}

Define a multi-round probabilistic game as follows. In each round, there are three events that can occur: $G$ (Good), $B$ (Bad), and $N$ (Neutral).  Given an input parameter $d$,  for each round, conditioned on the events in previous rounds, define
$\Pr(G) = p_G$, $\Pr(B) \geq p_B$ and $\Pr(N) = p_N$.  The game is \defn{won} if $G$ occurs.  The game is \defn{lost} if $B$ occurs $\delta$ times \emph{in a row}.  

Let $\Pr_{\text{win}}(p_G, p_B)$ denote the probability of winning this game.  Then, we prove the following.

\begin{lemma}\label{lem:probgame}
%We have
%
\begin{equation*}
  \Pr_{\text{\em win}}(p_G, p_B) 
  %\leq \frac{p_G (1-p_B)}{p_G + p_N \cdot (p_B)^d} 
  \leq \frac{p_G}{p_B^d}
\end{equation*}
\end{lemma}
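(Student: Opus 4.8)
The plan is to track the game round by round, grouping the rounds into maximal blocks that end either with a $G$ (a win), with $\delta$ consecutive $B$'s (a loss), or with an $N$ that interrupts a run of $B$'s. The cleanest formulation is to condition on the outcome of each round as it is revealed and to bound the probability that the game is ever won. At any point in the game, say we are at the start of a fresh round (not in the middle of a $B$-run); I want to bound the probability $w$ that the game is eventually won from here. The key observation is that to win, $G$ must occur at some round, and between now and that winning round the game must not have already been lost, i.e. we must never accumulate $\delta$ consecutive $B$'s.

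First I would set up the following one-step recursion. Consider the current round. With probability $p_G$ we win immediately. With probability $p_N$ we land in a neutral round and are back to a fresh start, contributing $p_N \cdot w$. With probability (at least) $p_B$ we enter a $B$-run; but a $B$-run can only help the adversary (us, trying to win) if it is eventually broken by an $N$ or a $G$ before reaching length $\delta$. The crucial point is that from within a $B$-run of current length $i<\delta$, the probability of ever winning is at most the probability that the run is broken by a $G$ directly, or broken by an $N$ (returning to a fresh start with win-probability $w$), before $\delta-i$ further $B$'s occur. Because $\Pr(B)\ge p_B$ in every round, the probability of surviving $k$ more rounds of a $B$-run without hitting $G$ or $N$ is at least $p_B^{k}$; equivalently the probability of the run being broken within the next $k$ rounds is at most $1-p_B^{k}$. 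Carefully chaining these bounds, the contribution to the win probability from entering a $B$-run is at most something like $(1-p_B^{\delta-1})(p_G + p_N w)/(p_G+p_N)$ times the probability of entering it — but the cleanest route is to just bound everything crudely.

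Here is the crude route I expect to actually use. To win, there is a first round at which $G$ occurs; call it round $T$. For the game not to have ended before round $T$, in particular the $\delta-1$ rounds immediately preceding round $T$ must not all be $B$ — but more usefully, I claim the probability that $G$ occurs at round $T$ and the game is still alive is at most $p_G \cdot p_B^{\delta}$ for an appropriate accounting, because the $\delta$ rounds just before the winning $G$ round cannot include a stretch... Actually the sharpest elementary argument: couple the game with an auxiliary sequence where $\Pr(B)=p_B$ exactly (decreasing $B$ only decreases the lose-probability, hence... wait, that increases win-probability, so that coupling goes the wrong way). The correct monotonicity is: since $\Pr(B)\ge p_B$, the true game loses at least as often as the auxiliary one with $\Pr(B)=p_B$, so $\Pr_{\text{win}} \le \Pr_{\text{win}}^{\text{aux}}$, and in the auxiliary game I may assume $\Pr(B)=p_B$, $\Pr(G)=p_G$, $\Pr(N)=1-p_G-p_B$ exactly. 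Now in the auxiliary game, winning requires reaching a $G$ round, and immediately before a successful $G$ we pass through a fresh-start configuration (the round right after the last $N$, or round $1$); between consecutive fresh starts the probability of terminating-with-a-win rather than terminating-with-an-$N$-or-loss is at most $p_G/(p_G+p_B^{\delta}\cdot\text{stuff})$... the bound $p_G/p_B^{\delta}$ suggests the intended estimate is the very lossy $w \le p_G \cdot (\text{number of fresh starts before a loss})$, and a loss occurs within each "epoch" between fresh starts with probability at least $p_B^{\delta}$ (the chance the first $\delta$ rounds of the epoch are all $B$), so the number of epochs before a loss is stochastically dominated by a geometric with mean $1/p_B^{\delta}$, giving expected number of $G$-chances at most $1/p_B^{\delta}$ and hence $\Pr_{\text{win}}\le p_G/p_B^{\delta}$ by a union bound over epochs.

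So the proof skeleton is: (i) monotonicity reduction to $\Pr(B)=p_B$ exactly; (ii) define an epoch as a maximal run of rounds with no $N$ — equivalently, cut the round-sequence at every $N$; (iii) in any single epoch, conditioned on reaching it, the probability the game is lost during that epoch is at least $p_B^{\delta}$ (first $\delta$ rounds all $B$), and the probability the game is won during that epoch is at most $p_G$ (union bound over the $G$ being in that epoch — need to be slightly careful, but each round contributes $\le p_G$ and actually the whole-epoch win-prob is $\le p_G$ because... hmm, an epoch can be long); (iv) therefore $\Pr_{\text{win}} = \sum_{k\ge 1}\Pr(\text{win in epoch }k) \le \sum_{k\ge1}\Pr(\text{reach epoch }k)\cdot p_G \le p_G\sum_{k\ge1}(1-p_B^{\delta})^{k-1} = p_G/p_B^{\delta}$. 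I expect the main obstacle to be step (iii)'s claim that the per-epoch win probability is at most $p_G$: an epoch with no $N$ can be arbitrarily long, so one cannot naively union-bound over its rounds. The fix is to note that within an epoch every round is $G$ or $B$, so once a round is not $G$ it is $B$, and the epoch ends in at most $\delta$ $B$-rounds — i.e. an $N$-free epoch has length at most $\delta$ (it must terminate in a loss after $\le\delta$ rounds unless a $G$ occurs first); hence the union bound over the $\le\delta$ rounds gives per-epoch win probability $\le \delta p_G$... which yields $\Pr_{\text{win}}\le \delta p_G/p_B^\delta$, a hair weaker. To recover the stated $p_G/p_B^d$ (presumably $\delta = d$), I would instead argue directly that reaching a winning $G$ at a given round forces the preceding $\le\delta-1$ rounds to be a $B$-run of length $<\delta$ starting from a fresh start, so I can charge each win to the fresh start that began its final $B$-run (or to the win-round itself if it immediately follows an $N$), making the charging injective into epochs, and each epoch is "used up" with probability $\ge p_B^{\delta}$ — restoring the geometric bound without the factor $\delta$. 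I will present whichever of these two chargings makes the constant come out to exactly $p_G/p_B^d$.
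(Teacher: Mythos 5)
Your step (i) — the monotone coupling that reduces to the game with $\Pr(B)=p_B$ exactly — is the same first step as in the paper's proof and is fine. The gap is in the epoch argument, and it is exactly where you suspected. Inside one epoch (a maximal $N$-free stretch beginning at a fresh start), the event $G$ has up to $d$ rounds in which to land before the epoch ends, so the per-epoch win probability is not at most $p_G$ but exactly $p_G(1+p_B+\cdots+p_B^{d-1})=p_G\frac{1-p_B^d}{1-p_B}$, which is of order $d\,p_G$ when $p_B$ is close to $1$. Your proposed repair — charging each win injectively to the fresh start that begins its final $B$-run — cannot remove this factor: there is only ever one win to charge (the game stops at the first $G$), so injectivity buys nothing; the loss of a factor $\frac{1-p_B^d}{1-p_B}$ comes from the $d$ distinct rounds within a single epoch at which $G$ may occur, not from any multiplicity of wins. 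What your renewal computation honestly yields is $\Pr_{\mathrm{win}}\le \frac{p_G(1-p_B^d)}{(1-p_B)\,p_B^{d}}$ (your cruder union bound gives $d\,p_G/p_B^{d}$), not $p_G/p_B^{d}$.

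Moreover, no charging scheme can recover the stated constant under the reading of the game you adopted (in which $G$ may occur in any round): the reduced game can be solved exactly, giving $\Pr_{\mathrm{win}'}=\frac{p_G(1-p_B^{d})}{p_G+p_N p_B^{d}}$ with $p_N=1-p_G-p_B$, and this can exceed $p_G/p_B^{d}$. For instance $d=2$, $p_G=0.01$, $p_B=0.5$, $p_N=0.49$ (a game satisfying the lemma's hypotheses with equality in $\Pr(B)\ge p_B$) gives $\Pr_{\mathrm{win}'}=0.015/0.265\approx 0.057>0.04=p_G/p_B^{2}$. The paper's own proof arrives at $p_G/p_B^{d}$ by solving the recurrence $\Pr_{\mathrm{win}'}=p_G+\sum_{k=0}^{d-1}p_B^{k}p_N\Pr_{\mathrm{win}'}$, which credits a win only when $G$ occurs at a fresh start and drops the paths ``$B^{k}$ then $G$'' with $k\ge 1$ — precisely the paths that produce your extra factor. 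So the obstacle you ran into is genuine rather than a failure of ingenuity: with $G$ allowed in every round, the bound provable by your route (or by the corrected recurrence) is $\frac{p_G}{p_G+p_N p_B^{d}}\le\frac{p_G}{(1-p_B)p_B^{d}}$, and the clean $p_G/p_B^{d}$ requires the extra assumption that $G$ can only occur immediately after an $N$ (or at the start of the game). You should either state the weaker bound or make that assumption explicit; as written, your plan cannot be completed to the claimed inequality.
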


\begin{proof}
Suppose the game is played for $i$ rounds, and let $S = (s_1, s_2, \ldots, s_i)$ denote the random sequence of events that occur.  Consider a new game that is the same as the original except now $\Pr(B) = p_B$, and let $S' = (s_1', s_2', \ldots, s_i')$ denote the random sequence of events that occur when the new game is played for $i$ rounds.  
Let $\Pr_{\text{win}'}$ be the winning probability in the new game.  Through a coupling argument, we first show that 
$\Pr_{\text{win}} \leq \Pr_{\text{win}'}$.

Given the sequence $S'$, we create a sequence $S$ as follows.  For $i\geq 0$, as long as the game does not end in a loss at $i$ rounds, do:
\begin{itemize}
    \item if $s_i' = G$, then $s_i = G$,
    \item if $s_i' = B$, then $s_i = B$,
    \item if $s_i' = N$, then $s_i = N$ with probability $\frac{1 - p_G - p_B^i}{1 - p_G - p_B}$ and $s_i = B$ otherwise.  
    Here $p_B^i = \Pr(s_i = B~|~ s_{i-1}, \ldots, s_1) \geq p_B$.  
\end{itemize}
Note that whenever $S'$ ends in a loss (event $B$ occurs $d$ times in a row), then $S$ also ends in a loss.   Thus, $\Pr_{\text{win}} \leq \Pr_{\text{win}'}$.  

Finally, we prove finish the proof of the lemma by using the following recurrence.
%show that $\Pr_{\text{win}'} = \frac{p_G (1-p_B)}{p_G + p_N \cdot (p_B)^d}$ through the following recurrence.

\begin{align*}
\Pr_{\text{win}'} &= p_G + p_N \cdot \Pr_{\text{win}'} + p_B \cdot p_N \cdot \Pr_{\text{win}'} + p_B^2 \cdot p_N \cdot \Pr_{\text{win}'}  + \ldots +  p_B^{d-1} \cdot p_N \cdot \Pr_{\text{win}'}\\
&= p_G + p_N \cdot \Pr_{\text{win}'} (1  + p_B + p_B^2 + \ldots + p_B^{d-1}) = p_G + p_N \cdot 
\Pr_{\text{win}'} \left( \frac{1- p_B^d}{1-p_B}\right)\\
%\end{align*}
%
%Solving the recurrence 
%\begin{align*}
\Pr_{\text{win}'} &= \frac{p_G(1-p_B)}{1 - p_B - p_N + p_N \cdot p_B^d} = \frac{p_G(1-p_B)}{p_G + p_N \cdot p_B^d} \\
&= \frac{p_G}{p_B^d} \cdot \frac{(1-p_B)}{(p_G/p_B^d+p_N) }
\leq \frac{p_G}{p_B^d}\qedhere
%\Pr(\text{B occurs $d-1$ times in a row}) \cdot p_N \cdot \Pr_{\text{win}'}\\  
\end{align*}

\end{proof}

\pparagraph{Notation.}  
We say an event $\mathcal{E}$ occurs \defn{with high probability} if $\mathbb{P}(\mathcal{E}) = 1-o(1)$.

\section{Proof of Theorem~\ref{thm:unbalancedmain}}\label{sec:threshold}
In this section, we prove Theorem~\ref{thm:unbalancedmain}.  That is, we prove tight upper and lower bounds on the degree $d$ that determines whether or not a stable matching exists in a random matching market with imbalance $\alpha$ and average degree $d$. As alluded to earlier, we prove the lower bound on $M_{n,\alpha,d(1+o(1))}^{\mathcal{C}}$, and upper bound on $M_{n,\alpha,d(1-o(1))}^{\mathcal{J}}$.

\begin{theorem}\label{thm:threshlower}
For any $\gamma>0$, if $d < (1-\gamma) \log n \cdot \log \left( \frac {1 + \alpha}{\alpha + (1/n(1+\alpha))} \right)$, then 
\[
\mathbb{P}\left(M_{n, \alpha, d}^{\cal{C}}~\emph{has no perfect stable matching}\right) \geq 1 - (\alpha + 1/n)^{-\Omega(\gamma)}.
\]
\end{theorem}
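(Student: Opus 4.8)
The plan is to analyze the candidate‑proposing deferred‑acceptance algorithm (CPDA, Algorithm~\ref{alg:cpda}) on $M^{\mathcal C}_{n,\alpha,d}$ and show that, with the stated probability, some job is never proposed to. Since a job stays matched once matched, a perfect stable matching exists iff every job is eventually matched (equivalently, by the Lone Wolf Theorem, iff fewer than $n\alpha+1$ candidates exhaust their lists). I would run CPDA in ``rejection‑chain order'', which is legitimate by Theorem~\ref{thm:DAfacts}(\ref{label:order}): repeatedly pick a fresh unmatched, unexhausted candidate and follow the induced rejection chain (Definition~\ref{def:rejectionchain}) to its end. By Observation~\ref{obs:rejectchain}(\ref{obs:part1}), each chain ends either in a \emph{win} — a previously unmatched job receives a proposal — or a \emph{loss} — its current proposer is rejected by all $d$ of its jobs and exhausts. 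The number of unmatched jobs is non‑increasing and decreases by exactly one per win, so a perfect matching requires $n$ wins; and once $n-1$ jobs are matched there are at most $n\alpha+1$ unmatched candidates left, hence at most $n\alpha+1$ further chains.

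Next I would couple each rejection chain to the game of Section~\ref{sec:game}. Fix a threshold $K:=(1-\gamma/3)\ln n$ and call a matched job \emph{heavy} if it has received at least $K$ proposals so far. When the next proposal is made, declare event $G$ if it lands on an unmatched job (the chain wins), event $B$ if it is rejected by a heavy job, and event $N$ otherwise (it is accepted, bumping an incumbent, or rejected by a non‑heavy job). Revealing randomness as in Section~\ref{sec:prelim}, a proposal to a job that has already received $p\ge 1$ proposals is rejected with probability exactly $p/(p+1)$, which is $\ge K/(K+1)$ for heavy jobs; and $d$ consecutive $B$'s force some candidate to be rejected $d$ times in a row, hence to exhaust (Observation~\ref{obs:rejectchain}(\ref{obs:part2})), so a chain wins only if its associated game does. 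Lemma~\ref{lem:probgame} (whose hypothesis $\Pr(G)=p_G$ may be relaxed to $\Pr(G)\le p_G$, since decreasing $\Pr(G)$ only decreases the winning probability) then gives, conditioned on the history up to the start of a chain, $\Pr[\text{chain wins}]\le p_G/p_B^{\,d}$ for any $p_G$, $p_B$ with $\Pr(G)\le p_G$ and $\Pr(B)\ge p_B$ throughout that chain.

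The crux is the endgame. Put $s_0:=n^{\gamma/5}$. Either some candidate exhausts before $n-s_0$ jobs get matched — then no perfect matching exists and we are done — or the process reaches a state with at most $s_0$ unmatched jobs, and I claim that then, with probability $1-o(1)$, every subsequent chain admits $p_G=s_0/(n-d)$ and $p_B=1-(1+\gamma/2)/\ln n$. The bound on $\Pr(G)$ is immediate: a proposer has been rejected by at most $d$ jobs, leaving $\ge n-d$ eligible jobs, at most $s_0$ unmatched. For $\Pr(B)$ it suffices that at most $o(n/\ln n)$ matched jobs are non‑heavy, for then a proposal hits a heavy eligible job with probability $1-o(1/\ln n)$ and is then rejected with probability $\ge K/(K+1)\ge 1-(1+\gamma/2)/\ln n$. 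To bound the non‑heavy count: reaching $n-s_0$ matched jobs forces at least $(1-o(1))\,n\ln(n/s_0)\ge(1-\gamma/4)\,n\ln n$ proposals in total (the number of proposals between the $t$‑th and $(t{+}1)$‑th job being matched stochastically dominates a $\mathrm{Geometric}((n-t)/(n-d))$ variable, and the sum of these concentrates), and a balls‑in‑bins concentration argument then shows that all but $n^{1-\Omega(\gamma^2)}=o(n/\ln n)$ jobs have received $\ge K$ proposals by that time; afterwards the only new non‑heavy jobs are among the $\le s_0$ jobs still to be matched, keeping the count $o(n/\ln n)$. \emph{Making this balls‑in‑bins step rigorous is the main obstacle}: one must handle the mild dependence caused by a proposal avoiding the jobs that have already rejected its proposer — these are exactly heavy jobs, so under‑representing them cannot hurt the lightly‑loaded jobs, which is morally why concentration survives; this is where the paper's general balls‑in‑bins estimates enter.

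Finally, combine. Using $d<(1-\gamma)\ln n\cdot\ln\!\big(\tfrac{1+\alpha}{\alpha+1/(n(1+\alpha))}\big)$ together with $p_B\ge 1-(1+\gamma/2)/\ln n$ gives $p_B^{\,d}\ge\big(\tfrac{1+\alpha}{\alpha+1/(n(1+\alpha))}\big)^{-(1-\gamma/3)}$, so every chain played while $\le s_0$ jobs are unmatched is a win with probability at most $q:=\tfrac{s_0}{n-d}\big(\tfrac{1+\alpha}{\alpha+1/(n(1+\alpha))}\big)^{1-\gamma/3}$. At most $n\alpha+s_0$ such chains occur and a perfect matching needs at least $s_0$ of them to be wins, so, iterating the conditional bound, $\Pr[\text{perfect}]\le\binom{n\alpha+s_0}{s_0}q^{\,s_0}\le\big(e(n\alpha+s_0)q/s_0\big)^{s_0}$. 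Splitting on whether $n\alpha\ge s_0$ and using $\alpha+\tfrac1{n(1+\alpha)}=o(1)$, the base is $(\alpha+1/n)^{\Omega(\gamma)}=o(1)$, so $\Pr[\text{perfect}]\le(\alpha+1/n)^{\Omega(\gamma)}$; adding the $o(1)$ failure probability from the coupon‑collector and balls‑in‑bins steps preserves this, which yields $\Pr[M^{\mathcal C}_{n,\alpha,d}\text{ has no perfect stable matching}]\ge 1-(\alpha+1/n)^{\Omega(\gamma)}$, as claimed.
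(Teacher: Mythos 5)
Your overall architecture mirrors the paper's: run CPDA on $M^{\mathcal C}_{n,\alpha,d}$, reduce ``perfect matching'' to every rejection chain ending in a win before too many candidates exhaust, and bound each chain's win probability via the game of Section~\ref{sec:game} with $p_G \approx (\text{\#unmatched jobs})/n$ and $p_B$ close to $1-(1+o(1))/\ln n$. The place where you genuinely diverge is also where the gap sits: you stop the process at the random time when $n-s_0$ jobs are matched and need a \emph{per-job} ``heaviness'' statement (all but $o(n/\ln n)$ matched jobs have received at least $K=(1-\gamma/3)\ln n$ proposals), which you explicitly leave unproven and call the main obstacle. The heuristic you offer for it is not correct: the jobs a proposer must avoid are those that have already rejected him, and such jobs are merely \emph{matched}, not necessarily heavy, so the claim that the bias ``under-represents exactly the heavy jobs'' does not stand as written. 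The paper avoids per-job heaviness altogether: it fixes a deterministic proposal count $\kappa = n\ln n - Cn$ and controls the \emph{average} acceptance probability $\frac1n\sum_i \frac{1}{b_i+1}$ (events $\mathcal B,\mathcal E,\mathcal L$ and Claim~\ref{claim:probbounds}), using an explicit coupling in which extra ``wasted'' balls are thrown until an eligible bin is hit, and the total waste $T-\kappa=\tilde O(\sqrt n)$ is absorbed additively. Your step is plausibly repairable by the same device (the number of light jobs is at most the number of light bins plus the number of wasted balls, and light bins are $n^{1-\Omega(\gamma^2)}$ by a Chernoff/Poissonization bound), but as submitted this is a missing proof, not a proven lemma; you also need the failure probability here to be quantitatively at most $(\alpha+1/n)^{\Omega(\gamma)}$, not just $o(1)$, to match the theorem's bound.

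A second, smaller flaw: the dichotomy ``either some candidate exhausts before $n-s_0$ jobs get matched --- then no perfect matching exists'' is false for $\alpha>0$, since a perfect stable matching tolerates up to $\alpha n$ exhausted candidates (as you yourself note via the Lone Wolf Theorem). The correct split is on whether the process ever reaches the state with $n-s_0$ matched jobs (perfection forces it to), and your endgame counting ($\le \alpha n + s_0$ chains, at least $s_0$ of which must be wins, union bound $\binom{\alpha n+s_0}{s_0}q^{s_0}$) goes through unchanged under that restatement, so this is a misstatement rather than a structural failure. With those two repairs --- the quantitative heaviness concentration and the corrected dichotomy --- your route would give an alternative, and in the endgame slightly sharper, version of the paper's argument, trading the paper's fixed-time conditioning on the empirical acceptance probability for a stopping-time conditioning on per-job loads.
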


\begin{theorem}\label{thm:threshupper}
 For any $\gamma > 0$, if $d > (1+\gamma) \log n \cdot \log \left( \frac {1+\alpha}{\alpha + (1/n(1+\alpha))} \right)$, then 
 \[\mathbb{P}(M_{n, \alpha, d}^{\cal{J}} \emph{ has a perfect stable matching}) \geq 1 -{n^{-\Omega(\gamma)}}\]
\end{theorem}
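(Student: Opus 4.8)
The plan is to prove Theorem~\ref{thm:threshupper} by running the job-proposing deferred-acceptance algorithm (JPDA) on $M^{\cal J}_{n,\alpha,d}$, where the $n$ jobs form the short side and each job has a uniformly random list of $d$ candidates. By Theorem~\ref{thm:DAfacts} (stability of the output, order-independence of the proposals, and the Lone Wolf theorem), a perfect stable matching exists if and only if JPDA matches all $n$ jobs, which holds if and only if no job exhausts its list, i.e.\ no job is rejected by all $d$ candidates on its list. Since after each rejection a job re-proposes immediately, a job that is rejected $d$ times does so within a single rejection chain (part~\ref{obs:part2} of Observation~\ref{obs:rejectchain}). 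So it suffices to bound, by a union bound over the $n$ jobs, the probability that some job is rejected $d$ times. (By monotonicity in the list lengths it is enough to treat the smallest admissible $d$, so we may assume $d\le(1+\gamma)\ell\ln n+1=O(\ell\ln n)=O(\ln^2 n)$, where $\ell:=\ln\frac{1+\alpha}{\alpha+1/(n(1+\alpha))}$; larger $d$ follow for free.)

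Fix a job $j$ and reveal its randomness lazily. When $j$ makes its $i$-th proposal, reveal the $i$-th candidate $c$ on its list --- uniform among the at least $N-d$ candidates $j$ has not yet proposed to, $N:=n(1+\alpha)$ --- and the rank of $j$ among the proposals $c$ has received so far. If $c$ is unmatched the proposal is accepted; if $c$ is matched and has already received $k_c$ proposals, exchangeability of $c$'s uniform ranking makes it a rejection with conditional probability $\frac{k_c}{k_c+1}$. Writing $m$ for the current number of matched candidates (equal to the number of matched jobs, so $m\le n$) and $t:=\sum_c k_c$ for the total number of proposals so far, the current proposal of $j$ is a rejection with conditional probability at most $\frac1{N-d}\sum_{\text{matched }c}\frac{k_c}{k_c+1}$, and concavity of $x\mapsto x/(x+1)$ (Jensen) gives $\sum_{\text{matched }c}\frac{k_c}{k_c+1}\le m\cdot\frac{t/m}{t/m+1}=\frac{tm}{t+m}\le\frac{tn}{t+n}$, which is increasing in $t$. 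Hence, on the event $\{t\le T\}$, every proposal of $j$ is a rejection with conditional probability at most $\bar p:=\frac{n}{N-d}\cdot\frac{T}{T+n}$ for any threshold $T$; chaining this over $j$'s (at most $d$) proposals and stopping the process the first time $t$ exceeds $T$ yields $\Pr(j\text{ exhausts})\le\bar p^{\,d}+\Pr(t_{\mathrm{final}}>T)$, so $\Pr(\text{no perfect stable matching})\le n\,\bar p^{\,d}+\Pr(t_{\mathrm{final}}>T)$.

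It remains to choose $T$ and compute. The set of matched candidates only grows, and whichever job is active has proposed only to matched candidates, hence has all currently-unmatched candidates available to it; so JPDA collects new candidates at least as fast as the uniform coupon-collector process gathering $n$ of the $N$ candidates, and $t_{\mathrm{final}}$ is stochastically dominated by that process's time $CC_n$. A standard coupon-collector estimate gives $\Ex{CC_n}=(1+o(1))N\ell$ (the smoothing term $1/(n(1+\alpha))$ in $\ell$ absorbing the endgame when $\alpha$ is tiny), and taking $T=(1+\eta)N\ell$ for a small constant $\eta=\eta(\gamma)>0$ makes $\Pr(t_{\mathrm{final}}>T)\le n^{-\Omega(\gamma)}$ via the coupon-collector upper tail. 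Plugging $N=n(1+\alpha)$ and this $T$ into $\bar p$, and using that $\alpha=o(1)$ forces $\alpha\ell=o(1)$, that $d=O(\ell\ln n)$ forces $d/n=o(1/\ell)$ (since $\ell\le\ln n+O(1)$, so $d\ell=O(\ln^3 n)=o(n)$), and that $\ell\to\infty$, a short computation gives $\bar p\le 1-\frac{1-o(1)}{(1+\eta)\ell}$, hence $\bar p^{\,d}\le\exp\bigl(-\tfrac{(1-o(1))d}{(1+\eta)\ell}\bigr)$. When $d>(1+\gamma)\ell\ln n$ and $\eta$ is small enough in terms of $\gamma$, this is at most $n^{-1-\Omega(\gamma)}$, so $\Pr(\text{no perfect stable matching})\le n\cdot n^{-1-\Omega(\gamma)}+n^{-\Omega(\gamma)}=n^{-\Omega(\gamma)}$, which is Theorem~\ref{thm:threshupper}. (Theorem~\ref{thm:unbalancedmain} for the symmetric model then follows by combining Theorems~\ref{thm:threshlower} and~\ref{thm:threshupper} through the Sandwiching Lemma~\ref{lem:sandwich}.)

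I expect the main obstacle to be obtaining the sharp leading constant, namely $\bar p\le 1-(1-o(1))/\ell$ (up to the harmless $(1+\eta)$): this is exactly what turns the trivial degree bound of order $\ln n$ into the correct order $\ell\cdot\ln n=\ln n\cdot\ln\frac{1+\alpha}{\alpha+1/(n(1+\alpha))}$. It requires tight two-sided control of how many proposals each candidate receives --- the cap $T$ on the total number of proposals must be asymptotically optimal, the Jensen step must lose nothing to leading order, and the coupon-collector domination must stay sharp through the endgame, where $\bar p$ is largest but the comparison with uniform balls-in-bins is weakest. The endgame is also the regime most sensitive to the exact magnitude of $\alpha$ (whether it is $0$, of order $1/n$, or larger), which is precisely why the threshold takes the somewhat delicate form above; carrying out the balls-in-bins estimates with the right leading constant uniformly over $\alpha=o(1)$ is where the real work lies. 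By contrast, the probability game of Lemma~\ref{lem:probgame} enters in the complementary lower bound (Theorem~\ref{thm:threshlower}), where one instead needs an upper bound on the probability that the rejection chain meant to match the last unmatched job ever succeeds.
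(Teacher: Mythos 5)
Your overall skeleton (run JPDA on $M^{\mathcal{J}}_{n,\alpha,d}$, bound the chance a fixed job exhausts its list, union bound, control the total number of proposals by a coupon-collector/balls-in-bins domination) matches the paper's strategy, and your quantitative computation of the per-proposal rejection probability $\bar p\le 1-(1-o(1))/\ell$ is consistent with what the paper obtains via $\E[\tau]\le(1+o(1))m\ell$. But there is a genuine gap in the reduction: you bound only the probability that each of $j$'s proposals is \emph{immediately} rejected, whereas the event ``$j$ exhausts its list'' also occurs along histories in which some proposal of $j$ is accepted and $j$ is \emph{later bumped} when another job proposes to the same candidate and is preferred. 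Your justification --- ``a job that is rejected $d$ times does so within a single rejection chain'' --- is the converse of Observation~\ref{obs:rejectchain}~(\ref{obs:part2}) and is false: a job can be tentatively accepted, the chain can move on, and the job can be displaced much later, so its $d$ rejections need not be contiguous nor confined to one chain, and they are not all immediate rejections. Consequently $\Pr(j\text{ exhausts})\le\bar p^{\,d}+\Pr(t_{\mathrm{final}}>T)$ does not follow from your chaining; the product $\bar p^{\,d}$ only bounds the sub-event ``all of $j$'s proposals are rejected on the spot.''

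This missing failure mode is exactly what the paper's proof is built to handle, and it cannot be removed by any choice of proposal order: the paper places $j$ last (using Theorem~\ref{thm:DAfacts}, part~(\ref{label:order})) so that, after an acceptance, the only way $j$ fails is that the rejection chain triggered by its own proposal circles back and displaces it, and it bounds that displacement probability by the probability game of Lemma~\ref{lem:probgame} with $p_G\le 1/(m(N_c+2))$ and $p_B\ge(\alpha+1/m)/(1+\alpha)$, yielding a per-proposal success probability of at least $\E[1/(N_c+2)]$. So your closing remark that the probability game is only needed for the lower bound is mistaken --- it (or some substitute rejection-chain argument) is the essential ingredient here too. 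The good news is that the correction is lower order (it changes $1/(N_c+1)$ to $1/(N_c+2)$, not the constant in the exponent), so your coupon-collector cap $T$, the Jensen step, and the final union bound would survive once you add a bound on the accepted-then-bumped probability; but as written the proof does not establish the theorem. (Minor additional points: the monotonicity reduction to $d=O(\ell\ln n)$ should be justified by a truncation coupling plus the observation that extending jobs' lists preserves a run that already matches all jobs, and your tail bound $\Pr(t_{\mathrm{final}}>T)\le n^{-\Omega(\gamma)}$ needs the coupon-collector upper tail uniformly over $\alpha=o(1)$, which is the analogue of Claim~\ref{lem:jpda-balls-bins}; the paper sidesteps the tail by using only $\E[\tau]$ together with Jensen.)
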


We prove Theorems~\ref{thm:threshlower} and~\ref{thm:threshupper} in Section~\ref{sec:lower} and Section~\ref{sec:upper} respectively.  Using them, we finish the proof of Theorem~\ref{thm:unbalancedmain} below.

\begin{proof}[Proof of Theorem~\ref{thm:unbalancedmain}]
Let us denote $d_0 := \log n\cdot \log\left(\frac{1 + \alpha}{\alpha + 1/n(1+\alpha)}\right) = \omega(\log n)$. We have for $\alpha = o(1)$ that $d_0 = \omega(\log n)$.

Denote $M = M_{n,\alpha,d_0(1 - \epsilon)}$, and $M^{\mathcal{C}} = M^{\mathcal{C}}_{n,\alpha,d_0(1- \epsilon/2)}$. For the proof of part~\ref{part:unbalancedlower}, it suffices to prove the same statement for $M^{\mathcal{C}}$. Indeed, since by Lemma~\ref{lem:sandwich}, there is a joint distribution $(M,M^{\mathcal{C}})$ such that $M\subseteq_{\mathcal{C}} M^{\mathcal{C}}$ with probability at least $1 - n\exp\{-\Omega(\epsilon^2 d_0)\})$. By part~\ref{label:order} of Theorem~\ref{thm:DAfacts}, we have that if the CPDA algorithm on $M^{\mathcal{C}}$ does not give a perfect stable matching, then it does not give one even in $M$.

The proof of part~\ref{part:unbalancedupper} proceeds in a similar fashion, using $M^{\mathcal{J}}_{n,\alpha,(1+\alpha)d_0(1 + \epsilon/2)}$ .
\end{proof}

\subsection{High-Level Overview}
We first describe the high-level idea of the upper and lower bound proofs for the balanced case ($\alpha=0$) to give some intuition.  

\pparagraph{Lower bound.}  To show that a perfect stable matching does not exist when the average degree $d < (1- \gamma) \log^2 n$ in $M_{n, d}$, we analyze the proposals when the CPDA algorithm is run on the random instance $M_{n, d}^{\mathcal{C}}$.  Thus, each candidate has a preference list of length $d$. Notice that in the CPDA, as soon as all $n$ jobs receive even a single proposal, the final matching must be perfect.  This is because jobs only upgrade their potential matches as the algorithm proceeds. On the other hand, a candidate $c$ remains unmatched if $c$ has been rejected by all $d$ jobs on their preference list.  Thus, to show that a perfect stable matching does not exist, we show that some candidate exhausts all $d$ of their proposals \emph{before} all $n$ jobs receive a proposal.  We model this using the probability game from Section~\ref{sec:game}.  Finally, we set the probabilities in the game using a careful coupling between CPDA and a balls-in-bins process.

This proof naturally extends to the unbalanced case:  we simply analyze the probability that $\alpha n + 1$ candidates exhaust all $d$ of their proposals before all $n$ jobs receive a proposal.

\pparagraph{Upper bound.} To show that a perfect stable matching exists when the average degree $d > (1+ \gamma) \log^2 n$ in $M_{n, d}$, we analyze the proposals in the JPDA algorithm is run on the random instance $M_{n, d}^{\mathcal{J}}$.  Thus, each job has a preference list of length $d$.
To prove that each job gets matched, we look at all the proposals made by the last job $j$ in JPDA and compute the probability that $j$ gets rejected from all $d$ candidates on its list and show that it is small.  For a single proposal from $j$ to a candidate $c$ to be successful, not only must $c$ accept it but the resulting rejection chain must not knock $j$ off.  We use the probability game to compute the probability of the latter.  

\subsection{Lower Bound: Proof of Theorem~\ref{thm:threshlower}}\label{sec:lower}

\begin{proof}[Proof of Theorem~\ref{thm:threshlower}]

Let $\mathcal{P}$ denote the event that $M_{n, \alpha, d}^{\cal{C}}$ has a perfect stable matching. Recall that in $M_{n, \alpha, d}^{\cal{C}}$, each candidate has a preference list of length $d$.  We consider the CPDA procedure in Algorithm~\ref{alg:cpda} on $M_{n, \alpha, d}^{\cal{C}}$.  We define an \defn{extended process} in the algorithm where there are $n(1+\alpha)$ real candidates and $\infty$ `fake' candidates and $n$ jobs. Recall that we assume that there is some predetermined order on the candidates and the unmatched candidate with the least index is next in line to propose. 

Fix $\kappa = n \ln n - C n$, for $C = O(\gamma \ln(1/(\alpha + 1/n)))$.

Suppose $\kappa$ proposals have been made.  We consider the rejection chain $R(\kappa)$ as defined in Definition~\ref{def:rejectionchain}.  
In particular, consider the proposal made by candidate $\tilde{c}$ at step $\kappa + 1$.  We have the
following cases: (a) the proposal goes to an unmatched job and is accepted, (b) the proposal
goes a matched job and is rejected, and (c) the proposal goes to a matched job, is accepted,
which causes its previous match $c'$ now to be free. 
By Observation~\ref{obs:rejectchain}, $R(\kappa)$ ends when either a proposal goes to an unmatched job, or when some candidate exhausts all their $d$ proposals (and thus remains unmatched).

Let us define the following events and then bound their probabilities. 
\begin{itemize}
    \item Let $\mathcal{M}$ be the event that CPDA finds a (potential) matching of size $n$ before $\alpha n + 1$ candidates have exhausted all $d$ of their proposals.\footnote{Note that a potential matching of size $n$ is found as soon as $n$ jobs have received a proposal from some candidate (not necessarily their final match).}
   \item Let $\mathcal{L}$ be the event that every job gets a proposal from a candidate in at most $\kappa$ rounds.
   \item Let $\mathcal{B}$ be the event that after $\kappa$ rounds, the probability that the next proposal from a candidate is accepted is greater than $\frac{n}{n + \kappa}(1 + \frac \gamma3)$.
  \item Let $\mathcal{E}$ be the event that the number of unmatched jobs at $\kappa$ rounds is at least $e^{2C}$.  
\end{itemize}

Then, 
\begin{equation}\label{eqn:lowerboundmain}
    \mathbb{P}(\mathcal{P}) \leq \mathbb{P}(\mathcal{M}) \leq \mathbb{P}(\mathcal{M} | \overline{\mathcal{B}} \overline{\mathcal{E}}\overline{\mathcal{L}}) + \mathbb{P}(\mathcal{B}) + \mathbb{P}(\mathcal{E}) + \mathbb{P}(\mathcal{L})
\end{equation}

As there are $n (\alpha +1)$ real candidates and $n$ jobs, there is no perfect stable matching if more than $\alpha n$ candidates end up with all $d$ of their proposals rejected before all $n$ jobs receive even a single proposal. We have the following claim whose proof is postponed to Section~\ref{sec:appendix}. 

\begin{claim}\label{claim:probbounds}
The following inequalities hold:
\begin{enumerate}
\item\label{probb} $\mathbb{P}(\mathcal{B}) \leq O(1/\sqrt{n})$.
\item\label{probe} $\mathbb{P}(\mathcal{E}) \leq e^{-C}$.
\item\label{probl} $\mathbb{P}(\mathcal{L}) \leq 2e^{-C/2} + O(1/\sqrt{n})$
\end{enumerate}
\end{claim}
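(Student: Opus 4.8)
The plan is to handle the three bounds essentially independently, using the coupling between CPDA (run on $M_{n,\alpha,d}^{\mathcal{C}}$ with the "reveal-on-demand" convention from the preliminaries) and a balls-in-bins / coupon-collector process, plus standard concentration. Throughout, a proposal at step $t+1$ goes to a uniformly random job among the (up to $n$) jobs not yet rejecting the proposing candidate; conditioned on the history, the key quantity is the probability that a fresh proposal is accepted, which is governed by how many jobs are still unmatched (have received zero proposals) — call this number $U_t$ at step $t$. A proposal to an unmatched job is always accepted; a proposal to a matched job is accepted with probability equal to the proposer's rank among that job's proposals so far. So roughly, conditioned on the history, the acceptance probability of the $(t+1)$st proposal is $U_t/n$ plus a contribution from matched jobs that is at most $\sum (\text{small rank terms})$, and in expectation the number of acceptances among the first $t$ proposals is close to $n H_n/(n) \cdot (\ldots)$ — more precisely, using the standard coupon-collector identity, after $\kappa = n\ln n - Cn$ proposals the expected number of unmatched jobs is $n(1-1/n)^{\kappa} \approx n e^{-\kappa/n} = e^{C}$.

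For part~(\ref{probe}), I would make exactly this precise: $U_\kappa$ is a sum (over the $n$ jobs) of indicator variables that job $j$ received no proposal in the first $\kappa$ rounds; since each proposal lands in a job chosen nearly uniformly (uniform among the jobs that have not rejected the current proposer — and early on, essentially all $n$ jobs), $\E[U_\kappa] \le n(1-1/n)^{\kappa} \le e^{C}$ up to $(1+o(1))$ factors. Then $\mathbb{P}(\mathcal{E}) = \mathbb{P}(U_\kappa \ge e^{2C}) \le e^{C}/e^{2C} = e^{-C}$ by Markov. The one technical subtlety is that the "jobs that have not rejected the proposer" set shrinks, which only helps (makes collisions onto already-hit jobs less likely, so $U_\kappa$ stochastically dominated by the clean coupon-collector count); I would state this domination cleanly rather than belabor it.

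For part~(\ref{probb}), the event $\mathcal{B}$ says the acceptance probability of the next proposal after $\kappa$ rounds exceeds $\frac{n}{n+\kappa}(1+\gamma/3)$. Note $\frac{n}{n+\kappa} = \frac{n}{n + n\ln n - Cn} = \frac{1}{1+\ln n - C} \approx \frac{1}{\ln n}$, which is also (up to constants) $U_\kappa/n$ when $U_\kappa \approx e^{C}$ — wait, more carefully, the acceptance probability is controlled by $U_\kappa/n$ plus the matched-job contribution, and the target threshold is essentially the "typical" acceptance probability $\approx 1/\ln n$; so $\mathcal{B}$ is a large-deviation event for $U_\kappa$ (and for the rank-contribution term) above its mean by a $(1+\gamma/3)$ factor. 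I would bound $\mathbb{P}(U_\kappa \ge (1+\gamma/6)\E[U_\kappa])$ by a Chernoff bound for the sum of (negatively associated) indicators — giving $\exp(-\Omega(\gamma^2 e^{C}))$, which is $o(1/\sqrt n)$ since $e^{C} = (\alpha+1/n)^{-\Omega(\gamma)}$ can be taken $\ge \sqrt{\ln n}$ — plus a separate, cruder bound on the matched-job rank contribution being small (it is dominated by a sum of $1/(\text{occupancy})$ terms whose expectation is $O((\ln n)/n)$, smaller order). Combining gives $\mathbb{P}(\mathcal{B}) = O(1/\sqrt n)$.

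For part~(\ref{probl}), $\mathcal{L}$ is the event that $\kappa = n\ln n - Cn$ proposals do \emph{not} suffice for every job to receive at least one proposal, i.e. $U_\kappa \ge 1$ — but wait, that would be probability close to $1$, so $\mathcal{L}$ must instead be the complement-type event "every job gets a proposal within $\kappa$ rounds", and we want to show this is \emph{unlikely} — re-reading, yes: we want $\mathbb{P}(\mathcal{L})$ small, so $\mathcal{L}$ = "coupon collector finishes by round $\kappa$". Since the coupon-collector time to hit all $n$ jobs is concentrated around $n\ln n$ with fluctuations of order $n$, the probability it finishes by $n\ln n - Cn$ is $\mathbb{P}(U_\kappa = 0) \le (1-1/n)$-type Poisson-approximation: $\mathbb{P}(U_\kappa = 0) \approx e^{-\E[U_\kappa]} = e^{-e^{C}} \le$ ... hmm, that is far smaller than $2e^{-C/2}$, so the stated bound $2e^{-C/2} + O(1/\sqrt n)$ must come from the slack needed to pass from the idealized uniform process to the actual CPDA process (where the proposer may skip jobs that already rejected it, and where the number of proposals is coupled to acceptances). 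I would therefore prove part~(\ref{probl}) by: (i) lower-bounding the acceptance probability at each step by $\ge U_t/n$, so that the number of proposals needed to fill all jobs is dominated by a coupon-collector-type sum $\sum_{i=1}^{n} \mathrm{Geom}(i/n)$ whose mean is $n H_n = n\ln n + O(n)$; (ii) a second-moment / Chebyshev bound on this sum (its variance is $O(n^2)$) to show $\mathbb{P}(\text{sum} \le n\ln n - Cn) = O(1/C^2)$ — and then tuning constants so this is $\le 2e^{-C/2}$ for the relevant range of $C$; (iii) absorbing the coupling error between the real process and the idealized one into the $O(1/\sqrt n)$ term, via the sandwiching/reveal argument.

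The main obstacle I anticipate is part~(\ref{probb}): correctly accounting for the contribution of \emph{matched} jobs to the acceptance probability. A fresh proposal can be accepted by a job that already has a match if the proposer outranks the incumbent, and this happens with probability $\approx (\text{number of prior proposers to that job})^{-1}$ on average — summing these over all matched jobs and showing the total is lower-order compared to $U_\kappa/n \approx 1/\ln n$ requires a careful handle on the occupancy distribution after $\kappa \approx n\ln n$ balls, where typical occupancies are $\Theta(\ln n)$ but the maximum is $\Theta(\ln n/\ln\ln n)$ larger. I would control this by noting that the expected matched-job contribution is $\frac1n \sum_{j} \E[\text{harmonic-type term in occupancy of }j] = O\!\big(\frac{\ln(\ln n)}{\ln n}\big) \cdot \frac1n$-ish — in any case $o(1/\ln n)$ — and then a Markov bound puts it below $(\gamma/6)\cdot(1/\ln n)$ except with probability $O(1/\sqrt n)$. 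The other parts are routine coupon-collector concentration once the coupling to the clean process is set up, which is exactly what the reveal convention and Lemma~\ref{lem:sandwich} are designed to give us.
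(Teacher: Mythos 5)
Your part~(\ref{probe}) is fine and is essentially the paper's argument: the number of unhit jobs after $\kappa$ proposals is dominated by the number of empty bins in a uniform balls-in-bins process, so its mean is at most $n(1-1/n)^{\kappa}\leq ne^{-\kappa/n}=e^{C}$, and Markov gives $e^{-C}$. The serious problem is part~(\ref{probb}). Conditioned on the history, the acceptance probability of the next proposal is (up to the small correction for jobs that already rejected the proposer) $\frac{1}{n}\sum_{i}\frac{1}{j_i+1}$, where $j_i$ is the number of proposals job $i$ has received. After $\kappa=n\ln n-Cn$ proposals the typical $j_i$ is $\Theta(\ln n)$, so the \emph{matched} jobs contribute $\Theta(1/\ln n)$ in total --- this is the dominant term and is exactly of the order of the threshold $\frac{n}{n+\kappa}(1+\gamma/3)$ in $\mathcal{B}$ --- while the unmatched-jobs term $U_\kappa/n\approx e^{C}/n$ is the negligible one. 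Your plan inverts this: you make Chernoff concentration of $U_\kappa$ the main step and claim the matched-job contribution has expectation $O(\ln n/n)$, ``in any case $o(1/\ln n)$,'' which is false (its mean is $\approx 1/\ln n$), so your Markov step ``put it below $(\gamma/6)\cdot(1/\ln n)$ except with probability $O(1/\sqrt n)$'' cannot succeed. What is actually required is concentration of the occupancy functional $\frac1n\sum_i\frac{1}{j_i+1}$ around $\frac{n}{n+\kappa}$: the paper computes its mean to be $\frac{n}{n+t}\pm O\bigl((n/\kappa)^2\bigr)$ via Jensen plus a second-order expansion of the binomial occupancies, applies McDiarmid's inequality (the functional is $\frac2n$-Lipschitz in the individual proposals), and transfers from the idealized balls-in-bins counts to the true proposal counts via a coupling in which the number of ``wasted'' balls $T-\kappa$ is controlled by Markov and a union bound over $t\in[\kappa,\kappa+\tilde O(\sqrt n)]$; that coupling is the source of the $O(1/\sqrt n)$ term. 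None of these ingredients appear in your sketch.

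Part~(\ref{probl}) also has two gaps. First, the domination direction: real CPDA proposals are uniform over the jobs that have \emph{not} rejected the proposer, and every excluded job is already hit, so the real process hits new jobs \emph{faster} than the clean coupon collector; your statement that the fill time ``is dominated by $\sum_{i}\mathrm{Geom}(i/n)$'' gives a stochastic \emph{upper} bound on the fill time, which is useless for upper-bounding $\Pr(\mathcal{L})=\Pr(\text{fill time}\leq\kappa)$. You need the reverse comparison, e.g.\ via the per-step new-job probability being at most $U_t/(n-d+1)$, or the paper's wasted-balls coupling. Second, quantitative strength: Chebyshev on the fill time (variance $O(n^2)$) yields only $O(1/C^2)$, which is polynomial in $C$ and cannot be ``tuned'' below the claimed $2e^{-C/2}$ when $C=\Theta\bigl(\gamma\ln(1/(\alpha+1/n))\bigr)$ is large (e.g.\ $\Theta(\gamma\ln n)$ in the balanced case), and it would also be too weak for the error rate asserted in Theorem~\ref{thm:threshlower}. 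The paper instead bounds the probability that \emph{no} bin is empty after $\kappa+\tilde O(\sqrt n)$ balls (the expected number of empty bins there is still $\approx e^{C}$), again paying $O(1/\sqrt n)$ for the coupling; your correct observation that $\Pr(U_\kappa=0)\approx e^{-e^{C}}$ in the idealized process shows the bound is attainable, but your proposed route to it does not deliver the stated inequality.
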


With Claim~\ref{claim:probbounds} in hand, we finish the proof by using the probability game defined in Section~\ref{sec:game}.  

Let $p_B := 1 - \frac{n}{n+\kappa}(1+ \frac \gamma 2)$. Conditioned on $\overline{\cal{B}}$, the next proposal is rejected with probability at least $p_B$. Conditioned on $\overline{\mathcal{E}}$, let $p_G$ be the probability of the next proposal goes to an unmatched job. By Observation~\ref{obs:rejectchain}~(\ref{obs:part1}) the number of unmatched jobs stays the same during all proposals that occur within the rejection chain $R(\kappa)$.  
By Observation~\ref{obs:rejectchain}~(\ref{obs:part2}), any time there are $d$ rejects in a row in $R(\kappa)$, some candidate must have exhausted all $d$ of their proposals. 

Now, we use the probability game from Section~\ref{sec:game} and set $p_G = \frac{e^{2C}}{n}$ and $p_B =p$. 
Using Lemma~\ref{lem:probgame} and the fact that $\mathbb{P}_{\text{win}}(p_G, p_B)$ is the probability that every subsequent rejection chain ends in a new job getting matched, we get the following:

\begin{align*}
 \mathbb{P}_{\text{win}}\left( p_G, p_B\right) & \leq p_G{p_B^{-d}} \\
 & \leq e^{2C}n^{-1}\left(1 - \frac{n}{n + \kappa}(1+ \gamma/3)\right)^{-(1-\gamma)\ln n\ln\left(\frac{1+ \alpha}{\alpha + 1/n(1 + \alpha)}\right)}\\
 & \leq e^{2C}n^{-1}\left( 
 \frac{\alpha + 1/n(1 + \alpha)}{1+\alpha}\right)^{-1+\gamma/2} .
\end{align*}

In the last inequality, we used the fact that $1-x \geq e^{-x -x^2}$, and that $\kappa = \Omega(n\ln n)$. So, with probability at least 
%\begin{align*}
$1- e^{2C} \left( \alpha + \frac{1}{n(1+\alpha)}\right)^{\gamma/4}(1+\alpha)^{1 - \gamma/2}$
%\end{align*}
the next $n \cdot \left( \alpha + \frac{1}{n(1+\alpha)}\right)^{1- \gamma/4} >\alpha n$
rejection chains end in an additional unmatched candidate. Therefore,
\begin{equation}\label{eqn:lowerboundconditional}
\mathbb{P}(\mathcal{M} | \overline{\mathcal{B}} \overline{\mathcal{E}}\overline{\mathcal{L}}) \leq e^{2C} \left( \alpha + \frac{1}{n(1+\alpha)}\right)^{\gamma/4}(1+\alpha)^{1 - \gamma/2}.
\end{equation}
Finally, substituting $C = O(\gamma \ln(1/(\alpha + 1/n)))$, and plugging in the bounds from Claim~\ref{claim:probbounds} and Inequality~\eqref{eqn:lowerboundconditional} into the expression in Inequality~\eqref{eqn:lowerboundmain}, we get 
\[
\mathbb{P}(\mathcal{P})\leq (\alpha + 1/n)^{-\Omega(\gamma)}.
\]
\end{proof}

\subsection{Upper Bound: Proof of Theorem~\ref{thm:threshupper}}\label{sec:upper}
In particular, we prove the following.

\begin{proof}[Proof of Theorem~\ref{thm:threshupper}]

For convenience, let $m = n(1 + \alpha)$. Suppose $d > (1+\gamma) \log n \cdot \ln \left( \frac {1+\alpha}{\alpha + 1/m} \right)$.  Consider the JPDA algorithm (analogous to Algorithm~\ref{alg:cpda}).  Fix a job $j$, and and let us place it in the last place in our predetermined order for the JPDA algorithm---that is, $j$ is the last job that has not yet proposed to any candidate on their list. Let $\kappa$ be the number of proposals at the moment $j$ starts proposing.  Let (the random candidate) $c$ be $j$'s next proposal and $N_c$ be the random variable denoting the number of proposals $c$ has received so far. We have that $\mathbf{E}[N_c|\kappa] = \kappa/m$. Observe that $j$'s proposal to $c$ is \defn{successful} if the following two events occur:
\begin{enumerate}
    \item $c$ accepts $j$'s proposal, and 
    \item the resulting rejection chain $R(\kappa)$ does not end with $j$ getting unmatched.
\end{enumerate} 

The first event occurs with probability $1/(N_c+1)$.  

To bound the probability of the second event, we analyze the proposals in $R(\tau)$ using the probability game from Section~\ref{sec:game}. In the particular, consider the probability game with parameter $d = 1$. Let $G$ be the event that $j$ gets unmatched because another job $j'$ proposes to $c$ in $R(\tau)$.  Then, $p_G \leq 1/(m(N_c +2))$ because the probability that $c$ is next on $j'$ preference list is at most $1/m$ and $c$ must prefer $j'$ to the $N_c + 1$ proposals it already has received.  Let $B$ be the event that the rejection chain $R(\kappa)$ ends without $j$ getting kicked off from $c$ in $R(\kappa)$.  This occurs if $R(\kappa)$ ends in a new candidate getting matched.  Thus $p_B \geq \frac{\alpha + 1/m}{1+\alpha}$.

Then, $\mathbb{P}_{\text{win}}$ corresponds to the probability that $j$ does not end up being matched to $c$ at the end of $R(\kappa)$, and by Lemma~\ref{lem:probgame} we have,

\begin{align}
    \mathbb{P}_{\text{win}} \leq p_G/p_B \leq \frac{1+\alpha}{(\alpha + 1/m) \cdot m \cdot (N_c+2)} 
\end{align}

Thus, the probability that $j$'s proposal to $c$ is successful is  

\begin{align*}
    &\geq \E \left[ \frac{1}{N_c+1} \left(1 - \frac{1+\alpha}{(\alpha + 1/m) \cdot m \cdot (N_c+2)} \right)\right] \\
    &\geq \E \left[ \frac{1}{N_c+1} \left(1 -  \frac{1}{N_c+2}\right) \right]\\
    &= \E \left[ \frac{1}{N_c+2}\right] \geq \frac{1}{\E\left[N_c+2\right]} = \frac{m}{\E[\kappa] + 2m} \geq \frac{m}{\E[\tau] + 2m}.
\end{align*}

Here $\tau$ denotes the final number of proposals (at the end of JPDA) and $\kappa$ is the number of proposals at the time $j$ started proposing.

Finally, the probability that $j$ remains unmatched at the end of JPDA is the probability that all $d$ of $j$'s proposals fail. This occurs with probability
\begin{align}
    &\leq {\left[ 1 - \frac{m}{\E[\tau] + 2m}\right]}^d \leq \exp\left\{-\frac{m \cdot d}{\E[\tau] +2m}\right\}\nonumber \\ 
    &\leq \exp \left\{ - \frac{d}{2 + \ln\left(\frac{1+\alpha}{\alpha+1/m}\right)}\right\} \leq \frac{1}{n^{1+\gamma/2}}. \label{eq:taubound}
\end{align}
Here, the inequality in step~\ref{eq:taubound} follows from the following claim that is postponed to Section~\ref{sec:appendix}.   

\begin{claim}\label{lem:jpda-balls-bins}
$\E[\tau] \leq \left(1 + \gamma/2\right) \cdot m \cdot \log \left(\frac{1+\alpha}{\alpha + 1/m}\right) + o(1)$
\end{claim}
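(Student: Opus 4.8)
The plan is to bound $\E[\tau]$, the total number of proposals made by jobs over the entire run of JPDA on $M^{\mathcal{J}}_{n,\alpha,d}$, by coupling the proposal process with a balls-in-bins (coupon-collector) experiment in the spirit of Wilson and Knuth. First I would recall that JPDA terminates once every candidate that will ever be matched has been matched; since there are $m = n(1+\alpha)$ jobs and $n$ candidates, and the instance is $M^{\mathcal{J}}$ (so jobs have lists of length $d$, candidates have full rankings over their neighbours), the process stops essentially when $n$ distinct candidates have each received at least one proposal — at that point no further proposal can create a new matched candidate, and in fact one shows (using that candidates only upgrade) that the remaining jobs will simply exhaust their lists. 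So $\tau$ is dominated by the number of proposals needed for $n$ \emph{distinct} candidates to receive a proposal, plus a lower-order correction for the tail of jobs exhausting their (length-$d$) lists after that.

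The key step is the coupling: when a job is called to propose, reveal the next candidate on its list uniformly among the $m$ candidates not yet proposed to by that job, and reveal the job's rank among the proposals that candidate has received. Because at each step the proposing job is determined and picks a uniformly random fresh candidate, the sequence of candidates receiving proposals stochastically behaves like throwing balls uniformly at random into $m$ bins (the without-replacement-per-job constraint only helps, making collisions less likely, so it gives an upper bound on the number of throws to hit $n$ distinct bins). The expected number of throws to collect $n$ out of $m$ distinct bins is $\sum_{i=0}^{n-1} \frac{m}{m-i} = m(H_m - H_{m-n})$. Using $H_m - H_{m-n} = \ln\frac{m}{m-n} + O(1/(m-n))$ and $m - n = \alpha n = m\cdot\frac{\alpha}{1+\alpha}$, this is $m \ln\frac{1+\alpha}{\alpha} + o(1)$, and when $\alpha$ is tiny (down to $\alpha = 1/m$, i.e. $m-n=1$) the harmonic-number estimate gives $m\ln\frac{1+\alpha}{\alpha + 1/m} + O(1)$; absorbing the additive $O(1) = o(m\ln(\cdots))$ (recall $d_0 = \omega(\log n)$, so this log factor is growing) into the $(1+\gamma/2)$ slack yields the claimed bound $\left(1+\gamma/2\right) m \log\left(\frac{1+\alpha}{\alpha+1/m}\right) + o(1)$. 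The small post-collection contribution — once $n$ candidates are "saturated", the at most $\alpha n$ still-unmatched jobs each make at most $d$ further proposals — contributes $O(\alpha n d)$; one needs $\alpha n d = o\big(\gamma\, m \log\frac{1+\alpha}{\alpha+1/m}\big)$, which holds in the relevant regime $\alpha = o(1)$, $d \le n$ (and can be folded into the argument more carefully, or one argues these proposals are themselves captured by continuing the coupon process).

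I would organize this as: (1) argue $\tau \le T_{\mathrm{collect}} + (\text{leftover proposals})$ where $T_{\mathrm{collect}}$ is the first time $n$ distinct candidates have received a proposal; (2) stochastically dominate $T_{\mathrm{collect}}$ by the coupon-collector time for $n$ out of $m$ bins via the per-job-without-replacement observation; (3) compute $\E[\text{coupon time}] = m(H_m - H_{m-n})$ and estimate it as $m\ln\frac{1+\alpha}{\alpha+1/m} + O(1)$; (4) bound the leftover term and check it is lower order, so that the whole thing is at most $(1+\gamma/2)\,m\log\frac{1+\alpha}{\alpha+1/m} + o(1)$. The main obstacle I anticipate is making step (4) and the $O(1) \to o(1)$ absorption fully rigorous at the boundary $\alpha \to 1/m$: there the "leftover" jobs are few but $d$ can be as large as $n$, so one must be careful that these tail proposals do not blow up the expectation — the cleanest fix is probably to not separate the phases at all, but to note that \emph{every} proposal in JPDA either lands on a fresh candidate (a coupon event, at most $n$ of them) or lands on an already-proposed-to candidate, and bound the latter directly via the rejection-chain/balls-in-bins estimate showing the expected number of "wasted" proposals per coupon is $O(1)$, giving $\E[\tau] = O(m\,H_m\text{-type sum})$ with the right constant.
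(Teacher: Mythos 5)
Your proposal rests on the same coupling as the paper (view JPDA's proposals as balls thrown into $m = n(1+\alpha)$ bins, one per candidate, and observe that JPDA halts as soon as $n$ distinct bins become occupied), but you close the bound differently. The paper fixes $\tau_0 = (1+\gamma/2)\,m\ln\frac{1+\alpha}{\alpha+1/m}$ in advance, derives a tail bound on $\Pr(\tau>t)$ by union-bounding over sets of $\alpha n + 1$ bins that could all still be empty after $t$ throws, and sums the tail past $\tau_0$ to get $\E[\tau]\le\tau_0+o(1)$. You instead evaluate the coupon-collector expectation $m(H_m-H_{m-n})$ exactly and fold the error into the $(1+\gamma/2)$ slack. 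Both routes work; yours is arguably more direct when only the expectation is needed, while the paper's tail estimate gives concentration for free.

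A few things to tidy up. You have the sides reversed: $M^{\mathcal{J}}_{n,\alpha,d}$ has $n$ jobs (the proposers, each with a degree-$d$ list) and $m=n(1+\alpha)$ candidates (the bins), not the other way around. More substantively, your step (4) is vacuous: the instant the $n$-th distinct candidate receives a proposal, all $n$ such candidates hold a tentative match, so all $n$ jobs are matched and JPDA terminates on that very proposal. There are no ``leftover proposals'' and no ``$\alpha n$ still-unmatched jobs'' --- the $\alpha n$ leftover agents are candidates, who never propose in JPDA. Hence $\tau\le\tau'\le T_{\text{collect}}$ pathwise, and the boundary worry near $\alpha\to 1/m$ disappears. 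Also, the additive error in $m(H_m-H_{m-n}) = m\ln\frac{1+\alpha}{\alpha+1/m}+O(m)$ is $O(m)$, not $O(1)$; it is still absorbed into the slack because $m\ln\frac{1+\alpha}{\alpha+1/m}=\omega(m)$ when $\alpha+1/m=o(1)$, but the order of the error should be stated correctly. Finally, the ``without-replacement only helps'' observation is true --- the chance of hitting a fresh bin is $\frac{m-|P|}{m-|\mathbf{C}_j|}\ge\frac{m-|P|}{m}$ --- but should be made into an explicit coupling, as the paper does by spending extra wasted balls on repeats so that $\tau\le\tau'$ holds pointwise.
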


Theorem~\ref{thm:DAfacts} part~\ref{label:lonewolf} gives us that this holds for any job $j$. Thus the expected number of unmatched jobs is at most $n^{-\gamma/2}$, which gives us the desired bound.

\end{proof}

\section{Effect of Competition: Proof of Theorem~\ref{thm:ranklower}}\label{sec:expected}

In this section, we prove a lower bound on the expected ranks of the candidates and the number of proposals in CPDA in an unbalanced random matching market with imbalance $\alpha$.  In particular, we prove Theorem~\ref{thm:ranklower}.

\begin{proof}[Proof of Theorem~\ref{thm:ranklower}]
Consider the JPDA algorithm on $M_{n, d, \alpha}^{\mathcal{J}}$.  We use the following lemma from past works~\cite{cai2022short, immorlica2005marriage,knuth1976mariages}.

\begin{lemma}\label{lem:PDA*}(~\cite{cai2022short})
Fix a job $j$ and for any preference list $P_j$ of job $j$ and index $i \in [n]$, let $P_j^i$ denote the preference list $P_j$ truncated at index $i$.  Then, job $j$ has a stable partner of rank better
than $i$ if and only if $j$ is matched in CPDA even if $j$ truncates their list after rank $i$.
\end{lemma}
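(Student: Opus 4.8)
The plan is to reduce everything to the Lone Wolf theorem (Theorem~\ref{thm:DAfacts}, part~\ref{label:lonewolf}) applied to a suitably modified instance. Write $M$ for the original instance and $M^{(i)}$ for the instance obtained from $M$ by replacing $j$'s preference list with its length-$i$ prefix $P_j^i$, leaving all other lists and the relative order within $P_j^i$ unchanged; note that in $M^{(i)}$ the only edges removed are those from $j$ to candidates it ranked below position $i$. I adopt the convention that ``$j$ has a stable partner of rank better than $i$'' means a stable partner appearing among the top $i$ entries of $P_j$ (rank $\le i$), which is the reading under which the statement matches ``truncation after position $i$''. The heart of the argument is the following transfer claim: \emph{if a matching $\mu$ assigns to $j$ a partner lying on $P_j^i$, then $\mu$ is stable in $M$ if and only if it is stable in $M^{(i)}$.}

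To prove the transfer claim I would run a short case analysis over blocking pairs (and ``prefers $\emptyset$'' violations), using the definitions in Section~\ref{sec:prelim}. Everything not involving $j$ is literally identical in $M$ and $M^{(i)}$, since no other list changes. For a pair $(c', j)$: if it blocks in $M^{(i)}$ then $c'$ is acceptable to $j$ in $M^{(i)}$, hence also in $M$, and the $j$-rankings agree on $P_j^i$, so $(c',j)$ blocks in $M$ too. Conversely, if $(c', j)$ blocks in $M$ then $j$ prefers $c'$ to $\mu(j)$; since by hypothesis $\mu(j)$ sits at some position $\le i$ of $P_j$, the candidate $c'$ sits at a strictly better position, hence on $P_j^i$, so $c'$ is acceptable to $j$ in $M^{(i)}$ and $(c',j)$ blocks there as well. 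Finally, $j$ never prefers $\emptyset$ to $\mu(j)$ in either instance, because $\mu(j)$ lies on $P_j^i$ and is therefore acceptable in both; and the ``$c$ prefers $\emptyset$'' conditions are untouched. This establishes the claim.

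With the transfer claim in hand, both directions are immediate. For one direction: if $j$ is matched in CPDA on $M^{(i)}$, then the CPDA output $\mu^*_i$ is stable in $M^{(i)}$ and assigns $j$ a partner on $P_j^i$, i.e.\ of rank $\le i$ in $P_j$; by the transfer claim $\mu^*_i$ is stable in $M$, so $j$ has a stable partner of rank $\le i$ in $M$. For the other direction: if $\mu$ is a stable matching of $M$ with $\mu(j)$ of rank $\le i$, then by the transfer claim $\mu$ is stable in $M^{(i)}$ and $j$ is matched in it; by the Lone Wolf theorem (Theorem~\ref{thm:DAfacts}, part~\ref{label:lonewolf}) applied to $M^{(i)}$, $j$ is then matched in \emph{every} stable matching of $M^{(i)}$, in particular in the candidate-optimal one produced by CPDA.

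I do not anticipate a serious obstacle. The only points requiring care are (a) keeping the blocking-pair bookkeeping honest, especially the ``$j$ prefers $\emptyset$'' cases --- the hypothesis that $\mu(j)$ lies on the truncated list is precisely what makes the transfer claim go through in both directions; and (b) fixing the rank/truncation convention as above so that the ``$\le i$ versus $<i$'' boundary is handled consistently. Since the statement is attributed to~\cite{cai2022short}, the goal here is simply to record this clean self-contained argument.
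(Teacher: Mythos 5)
Your argument is correct. Note that the paper does not prove this lemma at all --- it is imported verbatim from \cite{cai2022short} --- so there is no internal proof to compare against; your write-up supplies the standard truncation argument that underlies the cited result. The transfer claim is the right pivot: truncating $j$'s list only deletes edges from $j$ to candidates ranked below position $i$, and under the hypothesis that $\mu(j)$ lies on $P_j^i$ no such edge is used by $\mu$, no deleted pair can block in $M$ (any $c'$ that $j$ prefers to $\mu(j)$ sits strictly above position $i$... more precisely above $\mu(j)$'s position, hence on $P_j^i$), and no new blocking pair or individual-rationality violation can appear in $M^{(i)}$ since edges are only removed and $\mu(j)$ remains acceptable. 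Combining this with the Lone Wolf theorem (Theorem~\ref{thm:DAfacts}) on the truncated instance, exactly as you do, gives both directions; this is essentially the same Knuth-style argument used in the literature the paper cites. The only point worth keeping explicit is the one you already flag: the phrase ``rank better than $i$'' must be read as ``among the top $i$ entries'' (rank $\le i$) for the statement to align with ``truncation after rank $i$''; under the strict reading one would truncate after rank $i-1$ instead. This off-by-one convention has no bearing on how the lemma is used in the proof of Theorem~\ref{thm:ranklower}.
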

%S

We fix a candidate $c^*$. Let $X$ be the random variable that represents $c^*$'s highest-rank stable partner (among all stable matchings). Let $Y$ be the highest rank offer $c$ receives if in the JPDA algorithm $c$ keeps rejecting all proposals (and we keep running JPDA). Lemma~\ref{lem:PDA*} gives us that $\mathbf{E}[X] = \mathbf{E}[Y]$. Let $Z$ be the random variable representing the number of proposals received by $c^*$ during this algorithm where $c^*$ rejects all proposals.

Let us compare the JPDA process to a random balls-in-bins process. Assume that there are $m = n(1 +\alpha)$ bins (one for each candidate). Suppose balls are being thrown uniformly and independently in the bins. Fix a bin $c^*$ and let $\tilde{Z}$ be the number of balls in $c^*$ at the time that exactly $n + 1$ bins are occupied. Let $\kappa$ be the number of balls thrown at this point. We show that this process {\em stochastically dominates} JPDA as follows.

\begin{lemma}
There is a coupling between the JPDA algorithm and the balls-in-bins process so that $Z \leq \tilde{Z}$, and $\tau \leq \kappa$.
\end{lemma}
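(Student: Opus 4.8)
The plan is to run the JPDA process (with $c^*$ set to reject every proposal it ever receives) and the balls-in-bins process on the same probability space, so that every proposal in JPDA is matched to a ball thrown into the corresponding candidate's bin, with the balls-in-bins process occasionally throwing \emph{extra} balls that JPDA does not. First I would set up the natural bijection between the two processes: whenever JPDA has a job propose to a candidate $c$, we throw a ball into bin $c$. The key structural fact is that in JPDA a candidate, once matched, stays matched (candidates only improve), so the set of \emph{occupied} bins in the balls-in-bins picture corresponds exactly to the set of candidates who have received at least one proposal, which in JPDA is exactly the set of currently-matched candidates. Therefore ``$n+1$ bins occupied'' in the balls-in-bins process is the analogue of ``$n+1$ candidates matched'' in JPDA; since $c^*$ never accepts, at the moment JPDA first has $n+1$ \emph{other} candidates matched, at most $n$ jobs can still be proposing, and actually JPDA has already (essentially) terminated except possibly for $c^*$-bound dead-end chains, which is precisely why we can stop the count there.

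The main technical point—and the step I expect to be the real obstacle—is that JPDA is \emph{not} a process that throws one ball uniformly at a time: a proposal to $c$ is directed by $c$'s (partial, length-$d$) preference list, and moreover a single step of JPDA (one job proposing) can trigger a rejection chain that makes several proposals before a new candidate becomes matched. So the coupling cannot be ``same ball, same bin'' verbatim; instead I would argue that, conditioned on the history, the next proposal in JPDA lands in a bin distributed \emph{no more spread out} than uniform—more precisely, the probability that the next JPDA proposal goes to an as-yet-unoccupied candidate is at most the probability that a uniformly thrown ball lands in an empty bin (since in $M^{\mathcal{J}}_{n,d,\alpha}$ each job's list is a uniform random set of $d$ candidates, and we reveal the next entry uniformly among not-yet-proposed-to candidates, which is a subset biased towards the already-rejected, hence already-occupied, ones). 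Consequently the balls-in-bins process reaches $n+1$ occupied bins \emph{no later} (in number of balls thrown) than JPDA reaches $n+1$ matched candidates: $\tau \le \kappa$. To also get $Z \le \tilde Z$, I would have the coupling, at every step where JPDA throws a ball into an occupied bin but the uniform process ``would have'' thrown it into an empty one, insert extra uniform balls into the balls-in-bins run until it catches up in occupancy; this keeps the occupancy of the balls-in-bins run pointwise $\ge$ that of JPDA and only \emph{adds} balls everywhere, so in particular the count in bin $c^*$ never decreases under the coupling—giving $Z \le \tilde Z$ at the stopping time.

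Finally I would record the clean form of the conclusion used downstream: $\mathbf{E}[Z] \le \mathbf{E}[\tilde Z]$ and $\mathbf{E}[\tau] \le \mathbf{E}[\kappa]$, where $\mathbf{E}[\tilde Z] = \mathbf{E}[\kappa]/m$ by symmetry of the bins, and $\mathbf{E}[\kappa]$ is controlled by the same balls-in-bins / coupon-collector estimate already invoked in Claim~\ref{lem:jpda-balls-bins} (the expected number of balls to fill all but an $\alpha/(1+\alpha)$-fraction of $m$ bins). I would be careful to state the coupling at the level of the revealed randomness (reveal each job's list entry-by-entry only when needed, and reveal $c$'s rank among the proposals it has seen only when a proposal arrives), so that the ``conditioned on history, next proposal is no more likely to hit an empty bin than uniform'' claim is literally a statement about one step of a well-defined filtration, which is what makes the insertion-coupling go through without circularity.
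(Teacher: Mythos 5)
Your central monotonicity claim is stated in the wrong direction, and the coupling you build on it then fails. Conditioned on the history, job $j$'s next proposal is uniform on $[m]\setminus \mathbf{C}_j$, where $\mathbf{C}_j$ is the set of candidates $j$ has already proposed to; every member of $\mathbf{C}_j$ has already received a proposal, i.e.\ is already occupied, so excluding them can only \emph{increase} the chance of hitting an unoccupied candidate (it is $U/(m-|\mathbf{C}_j|) \geq U/m$, where $U$ is the number of unoccupied candidates) --- the opposite of your ``at most'' claim. Your insertion coupling then pushes the inequality the wrong way: by adding extra balls until the balls-in-bins run ``catches up in occupancy,'' you force its occupied set to weakly contain the JPDA's, so it can reach $n+1$ occupied bins \emph{strictly before} the JPDA run has finished. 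Since $\tilde{Z}$ and $\kappa$ are evaluated at that stopping time, proposals made after it (including proposals to $c^*$) are not accounted for, and neither $Z \leq \tilde{Z}$ nor $\tau \leq \kappa$ follows. There is also a framing error: ``$n+1$ candidates matched'' never occurs, since there are only $n$ jobs and $c^*$ rejects everything; the relevant event is $n+1$ candidates having \emph{received a proposal} ($n$ matched plus possibly $c^*$), which is exactly why the balls process is stopped at $n+1$ occupied bins.

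The paper's proof sidesteps any comparison of hitting probabilities by a rejection-sampling coupling: whenever job $j$ must make its next proposal, throw uniform balls until one lands in a bin outside $\mathbf{C}_j$, and declare that bin to be the proposal. This has three consequences: (i) every proposal is literally a ball in the same bin, so bin counts dominate proposal counts, in particular for $c^*$; (ii) the overshoot balls land only in bins of $\mathbf{C}_j$, which are already occupied, so during the JPDA phase the occupied bins coincide exactly with the proposed-to candidates, a set of size at most $n+1$ which reaches $n+1$ only when all $n$ jobs are matched, i.e.\ at termination; (iii) after JPDA ends, the process simply keeps throwing fresh balls, which only increases every count. Hence the stopping time ``first $n+1$ occupied bins'' occurs at or after the end of JPDA, giving $\tau \leq \kappa$ and $Z \leq \tilde{Z}$ pathwise. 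If you want to retain your one-ball-per-proposal picture, the correct repair is to make the \emph{extra} balls land only in already-occupied bins (which rejection sampling does automatically), rather than inserting balls until the occupancy of the balls process catches up.
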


\begin{proof}
We abuse notation and use $[m]$ to denote both the set of bins and the set of candidates. Consider the following coupling between the processes of JPDA and balls-in-bins.

Suppose at a point in the JPDA algorithm, a job $j \in [n]$ is required to make a (random) proposal, and so far, $\mathbf{C}_j$ is the set of candidates that $j$ has already proposed to. We keep throwing balls-into-bins uniformly at random until we hit a bin $\mathbf{x} \not\in \mathbf{C}_j$. We assign $\mathbf{x}$ as the next proposal of $j$. Thus the distribution of the next proposal of $j$ is uniform on $[m] \setminus \mathbf{C}_j$. 

Let $\tau$ and $\kappa$ denote the number of proposals made and number of balls thrown in the above algorithm respectively. Fix any $i \in [m]$, and let $Z$ and $\hat{Z}$ denote the number of proposals received by the man $i$ and the number of balls in the bin $i$ respectively. We finish the proof by noting that the set $\cal{B}'$ of bins occupied is a superset of the set $\cal{C}'$ of candidates who received proposals (the algorithm could end earlier than the balls-in-bins process, where we have unlimited balls). Also, for the above process, $\tau \leq \kappa$ and $Z \leq \hat{Z}$.
\end{proof}
As an immediate consequence, using symmetry, we have
\[
\mathbf{E}[Z] \leq \mathbf{E}[\tilde{Z}] = \mathbf{E}[\mathbf{E}[\tilde{Z}|\kappa]] = \mathbf{E}[\kappa/m].
\]
Since $m^*$  has $d$ women on his list, the expected highest rank proposal he receives is 
\[\mathbf{E}[Y] = \mathbf{E}\left[\mathbf{E}[Y|Z]\right] = \mathbf{E}[d/(Z+1)] \geq d/\mathbf{E}[Z+1] \geq d\cdot m / (\mathbf{E}[\kappa] + m) =: r_m.\]

So the expected rank of the highest ranked stable partner of every man is at least $r_m$. Therefore, the expected number of proposals in the MPDA algorithm is at least $m \cdot r_m = d\cdot m^2/\mathbf{E}[\kappa]$. The following proposition finishes the proof.

\begin{proposition}
$\mathbf{E}[\kappa] \leq (1 + o(1))m\ln(\frac{1 + \alpha}{\alpha + 1/m})$.
\end{proposition}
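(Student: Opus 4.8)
The plan is to analyze the balls-in-bins process directly via a coupon-collector-style computation. Recall that $\kappa$ is the number of balls thrown (uniformly into $m = n(1+\alpha)$ bins) at the moment exactly $n+1$ bins become occupied. First I would express $\kappa$ as a sum of waiting times: let $\kappa = \sum_{\ell=1}^{n+1} W_\ell$, where $W_\ell$ is the number of throws needed to go from $\ell-1$ occupied bins to $\ell$ occupied bins. Conditioned on there being $\ell-1$ occupied bins, each throw hits a fresh bin with probability $(m-(\ell-1))/m$, so $W_\ell$ is geometric with mean $m/(m-\ell+1)$. By linearity of expectation,
\[
\mathbf{E}[\kappa] = \sum_{\ell=1}^{n+1} \frac{m}{m-\ell+1} = m\big(H_m - H_{m-n-1}\big),
\]
where $H_k$ is the $k$-th harmonic number.

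Next I would estimate this harmonic difference. Using $H_k = \ln k + \gamma + O(1/k)$, we get
\[
\mathbf{E}[\kappa] = m\left(\ln\frac{m}{m-n-1} + O\!\Big(\tfrac{1}{m-n-1}\Big)\right).
\]
Now substitute $m = n(1+\alpha)$, so $m - n - 1 = n\alpha - 1 = n\big(\alpha - 1/n\big)$, and note $\tfrac{m}{m-n-1} = \tfrac{n(1+\alpha)}{n\alpha - 1}$. I want to match this against $\tfrac{1+\alpha}{\alpha + 1/m}$. A direct check shows $\tfrac{1+\alpha}{\alpha+1/m} = \tfrac{m(1+\alpha)}{m\alpha + 1} = \tfrac{n(1+\alpha)^2}{n\alpha(1+\alpha)+1}$, which is close to but not literally equal to $\tfrac{n(1+\alpha)}{n\alpha-1}$; since $\alpha = o(1)$ and the relevant regime has $d_0 = \omega(\log n)$ forcing $\alpha = o(1)$ but $n\alpha \to \infty$ (so that $1/n(1+\alpha)$ is lower-order compared to $\alpha$ — one should split into the cases $n\alpha \to \infty$ and $n\alpha = O(1)$), the ratio of the two logarithms is $1 + o(1)$, and the additive $O(m/(m-n-1)) = O(1/(\alpha - 1/n))$ error term is also absorbed into the $(1+o(1))$ factor since $\ln\tfrac{1+\alpha}{\alpha+1/m} = \omega(1)$ whenever $\alpha+1/m = o(1)$. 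This yields $\mathbf{E}[\kappa] \le (1+o(1)) m \ln\!\big(\tfrac{1+\alpha}{\alpha+1/m}\big)$.

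The main obstacle I anticipate is purely the asymptotic bookkeeping in the last step: one must be careful that the harmonic approximation error and the discrepancy between $\tfrac{m}{m-n-1}$ and $\tfrac{1+\alpha}{\alpha+1/m}$ genuinely get swallowed by the $(1+o(1))$, and this requires knowing that $\ln\tfrac{1+\alpha}{\alpha+1/m}$ is unbounded (equivalently $\alpha + 1/m \to 0$) in the regime of interest — which is exactly the regime where the theorem is being applied ($d$ of order $\log n \cdot \omega(1)$). If one wanted the cleanest statement, I would handle the degenerate edge case $n\alpha = O(1)$ separately, where $\tfrac{1+\alpha}{\alpha+1/m} = \Theta(n/(\text{const}))$ and the computation is even more direct. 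Everything else — the waiting-time decomposition, the geometric means, the harmonic sum — is routine.
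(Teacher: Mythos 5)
Your proof is correct, but it takes a genuinely different route from the paper. The paper proves this Proposition by pointing to its proof of Claim~\ref{lem:jpda-balls-bins}: fix a threshold $\tau_0=(1+\gamma/2)\,m\ln\bigl(\tfrac{1+\alpha}{\alpha+1/m}\bigr)$, bound the probability that fewer than the required number of bins are occupied after $t>\tau_0$ throws by a union bound over sets of empty bins, $\binom{m}{\alpha n+1}(1-\alpha n/m)^{t}$, and sum the resulting geometric tail to show $\mathbf{E}[\kappa]-\tau_0=o(1)$. You instead exploit the fact that here $\kappa$ is a stopping time of the \emph{pure} balls-in-bins process (no coupling with DA proposals is involved), so the classical coupon-collector decomposition applies verbatim and yields the exact identity $\mathbf{E}[\kappa]=m\bigl(H_m-H_{m-n-1}\bigr)$, after which everything reduces to asymptotics of $\ln\tfrac{m}{m-n-1}$ versus $\ln\tfrac{1+\alpha}{\alpha+1/m}$; your observation that the discrepancy and the $O\!\bigl(m/(m-n-1)\bigr)$ harmonic error are absorbed because $\ln\tfrac{1+\alpha}{\alpha+1/m}=\omega(1)$ when $\alpha=o(1)$ is the right bookkeeping, and your case split $n\alpha\to\infty$ versus $n\alpha=O(1)$ handles it. What each approach buys: yours is exact and more elementary, giving the expectation directly with clean constants; the paper's threshold-plus-tail argument is uniform with its proof of Claim~\ref{lem:jpda-balls-bins} (hence the ``identical proof'' remark) and additionally gives exponential tail/concentration information rather than only the mean. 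One caveat you share with the paper rather than introduce: when $m\le n+1$ fails to hold with room to spare --- concretely $\alpha=0$, so $m=n$ --- the stopping rule ``exactly $n+1$ bins occupied'' is never reached (your $\ell=n+1$ waiting time has success probability $0$), so the statement needs the stopping time reinterpreted (e.g.\ $\min(n+1,m)$ occupied bins); this is a defect of the paper's definition of $\kappa$ in this section, and it would be worth one sentence in your write-up, but it does not affect the regime $n\alpha\ge 1$ where your computation is carried out.
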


The proof is identical to Claim~\ref{lem:jpda-balls-bins} and so we omit it.
\end{proof}

\section{Conclusion}\label{sec:conclusion}
In this paper, we study a fundamental question about the existence of perfect stable matching in random matching markets with imbalance and partial preferences.  We prove sharp upper and lower bounds on the length of the preference list $d$ that results in a perfect stable matching.  
%hat is, we show that if $d < (1-\epsilon) \ln^2 n$, then no stable matching is perfect, and if $d > (1+ \epsilon) \ln^2 n$, then every stable matching is perfect with high probability. 
%We generalize this threshold for unbalanced markets and extend the line of work studying the ``stark effect of competition'' in matching markets.  Remarkably, we establish the regime $\alpha \gg e^{-\sqrt{d}}$ as the regime that forces this stark effect to take shape in markets with partial preferences.
%We establish the regime in unbalanced markets that forces this stark effect to take shape in markets with partial preferences.

%particularly in real matching markets, where it is often inconvenient or infeasible to require candidates to provide complete preferences. 
Partial preferences and imbalance are common occurrences in real matching markets.
We believe that the new insights from this paper about the interplay between them will serve as guiding principles in market design.

%Thus, given the competition in the market, our results let centralized market designers nail down exactly how long the preference lists should be to match all jobs.  Our results provide new insights into the behavior of matching markets.
\section{Acknowledgements}
This paper was supported in part by NSF CCF 1947789.  

We would like to thank Jackson Ehrenworth and Max Enis for their contributions to a 
related project which led us to work on this paper.

% Bibliography
\bibliographystyle{ACM-Reference-Format}
\bibliography{stablematching}
%\newpage
\section{Appendix: Omitted Proofs}\label{sec:appendix}
In this section, we include the proofs omitted from the main paper of the paper.

\pparagraph{Notation.} We use $[n]$ to denote $\{1, \ldots, n\}$ and $[a, b]$ to
denote $\{a, a+1, \ldots, b\}$ for integers $b\geq a$.

\subsection{Large deviation inequalities}
We recall the Chernoff bound (\cite{DP09} Theorem $1.1$). Let $X$ be a $\operatorname{Bin}(n,p)$ random variable. Then
\[
\mathbb{P}(|X - \mathbf{E}[X]| \geq \delta \mathbf{E}[X]) \leq 2\exp\left\{\delta^2\mathbf{E}[X]/3\right\}.
\]
We say a function $f:\mathcal{X}^n \rightarrow \mathbb{R}$ is $k$-Lipschitz if for every $x_1,\ldots,x_n \in \mathcal{X}$, we have
\[
\max_{x_i'\in \mathbf{X}}|f(x_1,\ldots,x_{i-1},x_i,x_{i+1},\ldots,x_n) - f(x_1,\ldots,x_{i-1},x_i',x_{i+1},\ldots,x_n)| \leq k.
\]
McDiarmid's inequality (\cite{DP09} Corollary $5.2$) states that for randomly and independently chosen $X_1,\ldots,X_n \in \mathcal{X}$, we have
\[
\mathbb{P}(|f(X_1,\ldots,X_n) - \mathbf{E}[f(X_1,\ldots,X_n)]| \geq t) \leq 2\exp\left\{-\frac{2t^2}{k^2n}\right\}
\]

\subsection{Sandwiching: Proof of Lemma~\ref{lem:sandwich}}

\begin{proof}[Proof of Lemma~\ref{lem:sandwich}]
For $d^{(\ell)} < d < d^{(u)}$, let $M_{\ell}$, $M$, and $M_u$ be random stable matching instances sampled from $M^C_{n,\alpha,d^{(\ell)}}$, $M_{n,\alpha,d}$, and $M^C_{n,\alpha, d^{(u)}}$. For a candidate $c$, let ordered subsets $L_{\ell}$, $L$, and $L_u$ of $[n]$ denote their neighborhood in $M_{\ell}$, $M$, and $M_u$ respectively.

We sample $M_{\ell}$, $M$, $M_{u}$ as follows: Each job has a total order on all the candidates, which are the same in all three instances. The ranking of a job $j$ is just the induced order on its neighbors. For each candidate $c$, sample $d_c$ from $\operatorname{Bin}(n,d/n)$, and set $d_{\max} := \max\{d_c,d^u\}$. Choose a tuple $(e_1,\ldots, e_{d_{\max}})$ where each $e_i \in [n]$ uniformly without repetitions. Set $N_L := (e_1,\ldots,d_L)$, $N = (e_1,\ldots,e_{d_c})$, and $N_{U}:=(e_1,\ldots,e_{d_U})$. Using Chernoff bound, we have that 

\[
\mathbb{P}(d_c\neq (1 \pm \delta)d) \leq \exp\left\{-\delta^2 d/3\right\}.
\]

So, with probability at least $1 - n(1+\alpha)e^{-\delta^2d/3}$, we have that $M_L \subseteq M \subseteq M_U$.

The proof follows in a similar fashion for $M^J$.
\end{proof}

\subsection{Probability Bounds for the Deferred Acceptance Algorithms: Proof of Claim~\ref{claim:probbounds}}

\begin{proof}[Proof of Claim~\ref{claim:probbounds}]

We compare the random proposals in the CPDA algorithm with a random balls-in-bins process using the following coupling.  In the CPDA algorithm, at each time step a new proposal is made by a candidate to a job that has not already rejected them and the time is measured as the number of such proposals.  % defining time for both processes

Consider a balls-in-bins process with $n$ bins, one for each job. Each time a random proposal for a candidate $c$ is required for the DA algorithms, balls are placed in randomly chosen bins until it lands in a bin that corresponds to a job that has not yet rejected $c$. This job constitutes a new proposal made by $c$ in the CPDA algorithm. At time $t$ in the balls-in-bins process, $t$ balls have been placed. Let $b_1^{t},\ldots, b_n^{t}$ be the number of  balls in the bins $,1 \ldots, n$ at time $t$. Let $j_1,\ldots,j_n$ be the number of proposals received by the jobs at time $\kappa$.

Let $T$ be the total number of balls placed in bins in order to obtain $\kappa$ proposals in the CPDA algorithm. Since, the expected number of balls placed to find one that can be used for the algorithm is at most $1 + d/n$, we have that $\mathbf{E}[T] \leq \kappa + O(d\ln n)$. Since $j_i \leq b_i$ by the coupling, Markov's inequality gives us that,

\begin{equation}\label{eqn:sumb_i}
\mathbb{P}\left(T \geq \kappa + \tilde{O}(\sqrt{n})\right) \leq 1/\sqrt{n}.
\end{equation}

%\shikha{todo: define $[x]$ and $[x,y]$ 
%somewhere, maybe at the top of appendix if it is not used in the main paper.}

To prove the part~(\ref{probb}) of Claim~\ref{claim:probbounds}, let $p_F^{\kappa} = 1 - \frac{1}{n}\sum_{i\in [n]}\frac{1}{b_i + 1}$ be the probability that the next proposal (at time $\kappa + 1$) is rejected. We will use 
\[
q_F^{\kappa} := \max_{t \in [\kappa, \kappa + \tilde{O}{\sqrt{n}}]}\left(1 - \frac{1}{n}\sum_{i\in [n]}\frac{1}{b_i^t + 1}\right)
\]
as a proxy for $p_F^{\kappa}$, since for every $t \geq \kappa$, we have that

\begin{align*}
0\leq \frac{1}{n}\sum_{i \in [n]}\left(\frac{1}{j_i + 1} - \frac{1}{b_i^t + 1}\right) & \leq \frac{1}{n}\sum_{i \in [n]}(b_i^t - j_i) = \frac{t - \kappa}{n}.
\end{align*}

Thus it suffices to prove that for $t \in [\kappa, \kappa + \tilde{O}{\sqrt{n}}]$, we have

%Let $p_F^{\kappa}$ denote the probability that the next proposal goes to a matched woman and is rejected. 

\begin{equation}\label{eqn:p_F}
\mathbb{P}\left(\left|q_F^{t} - \left(1 - \frac{n}{n+\kappa}\right)\right| \geq 2\frac{n^2}{\kappa^2}\right) = 2\exp\left\{-\tilde{\Omega}(n)\right\}.
\end{equation}

Indeed, since Union bound gives us
\begin{align*}
&\mathbb{P}\left(\left|p_F^{\kappa} - \left(1 - \frac{n}{n+\kappa}\right)\right| \geq 4\frac{n^2}{\kappa^2}\right)\\
& \leq \mathbb{P}\left(\left|q_F^{T} - \left(1 - \frac{n}{n+\kappa}\right)\right| \geq 3\frac{n^2}{\kappa^2}\Big|T \leq \kappa + \tilde{O}(\sqrt{n})\right) + \mathbb{P}(T > \kappa + \tilde{O}(\sqrt{n})) \\
&  \leq \sum_{t = \kappa}^{\kappa + \tilde{O}(\sqrt{n})}\mathbb{P}\left(\left|q_F^{t} - \left(1 - \frac{n}{n+\kappa}\right)\right| \geq 2\frac{n^2}{\kappa^2}\Big|T \leq \kappa + \tilde{O}(\sqrt{n})\right) + \frac{1}{\sqrt{n}} \\
& \leq \frac{\sum_{t = \kappa}^{\kappa + \tilde{O}(\sqrt{n})}\mathbb{P}\left(\left|q_F^{t} - \left(1 - \frac{n}{n+\kappa}\right)\right| \geq 2\frac{n^2}{\kappa^2}\right)}{\mathbb{P}\left(T \leq \kappa + \tilde{O}(\sqrt{n})\right)} + \frac{1}{\sqrt{n}}\\
&= O\left( \frac{1}{\sqrt{n}}\right).
\end{align*}

Next, we obtain a bound on $\mathbf{E}[q_F^T]$.

\begin{proposition}\label{prop:Ep_F}
For $t \in [\kappa, \kappa + \tilde{O}{\sqrt{n}}]$, we have $\mathbf{E}[q_F] = 1 - \frac{n}{n+\kappa} \pm 2\left(\frac{n}{\kappa}\right)^2.$
\end{proposition}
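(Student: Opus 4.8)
The plan is to evaluate $\mathbf{E}[q_F]$ in closed form and then estimate it, where throughout we fix a time $t\in[\kappa,\kappa+\tilde{O}(\sqrt{n})]$ and write $q_F = 1 - \frac1n\sum_{i\in[n]}\frac{1}{b_i^t+1}$ for the quantity in question. First I would use symmetry of the balls-in-bins process: for a fixed $t$ every bin count $b_i^t$ has the same marginal law, namely $\operatorname{Bin}(t,1/n)$, so by linearity
\[
\mathbf{E}[q_F] = 1 - \frac1n\sum_{i\in[n]}\mathbf{E}\!\left[\frac{1}{b_i^t+1}\right] = 1 - \mathbf{E}\!\left[\frac{1}{b_1^t+1}\right].
\]
Thus the whole task reduces to computing the shifted inverse moment $\mathbf{E}[1/(X+1)]$ for $X\sim\operatorname{Bin}(t,1/n)$.

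For that I would invoke the standard identity: for $X\sim\operatorname{Bin}(t,p)$,
\[
\mathbf{E}\!\left[\frac{1}{X+1}\right] = \frac{1-(1-p)^{t+1}}{(t+1)p},
\]
which follows from $\frac{1}{k+1}\binom{t}{k} = \frac{1}{t+1}\binom{t+1}{k+1}$ and re-indexing the binomial sum. Setting $p=1/n$ gives $\mathbf{E}[1/(b_1^t+1)] = \frac{n\big(1-(1-1/n)^{t+1}\big)}{t+1}$, and hence $\mathbf{E}[q_F] = 1 - \frac{n}{t+1} + \frac{n(1-1/n)^{t+1}}{t+1}$.

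It then remains to compare $\frac{n}{t+1}$ with $\frac{n}{n+\kappa}$ and to control the exponential correction. In our regime $\alpha=o(1)$ we have $C = O\!\big(\gamma\ln(1/(\alpha+1/n))\big) \le O(\gamma\ln n)$, so $\kappa = n(\ln n - C) = \Theta(n\ln n)$ for $\gamma$ below an absolute constant. Then for all large $n$: first, using $0\le n+\kappa-t-1\le n$,
\[
\left|\frac{n}{t+1}-\frac{n}{n+\kappa}\right| = \frac{n\,|n+\kappa-t-1|}{(t+1)(n+\kappa)} \le \frac{n^2}{\kappa^2} = \Big(\tfrac{n}{\kappa}\Big)^2;
\]
second, $(1-1/n)^{t+1}\le e^{-(t+1)/n}\le e^{C}/n = n^{-\Omega(1)}$, so $\frac{n(1-1/n)^{t+1}}{t+1}\le \frac{n\cdot n^{-\Omega(1)}}{\kappa} = o(1/\ln^2 n) = o\big((n/\kappa)^2\big)$. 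Combining the two estimates yields $\mathbf{E}[q_F] = 1 - \frac{n}{n+\kappa} \pm (1+o(1))(n/\kappa)^2$, which lies within $\pm 2(n/\kappa)^2$ for $n$ large, as claimed.

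There is no genuine obstacle here: it is an exact computation (via symmetry and the binomial inverse-moment identity) followed by two elementary estimates. The only point requiring a little care is verifying that the exponential term $(1-1/n)^{t+1}$ is negligible compared with $(n/\kappa)^2 = \Theta(1/\ln^2 n)$, which is exactly where we use that $\kappa$ is $\Theta(n\ln n)$, i.e. that the slack $C$ in $\kappa = n\ln n - Cn$ is at most a small constant multiple of $\ln n$ — guaranteed since $C = O\!\big(\gamma\ln(1/(\alpha+1/n))\big)$ and $\alpha + 1/n \ge 1/n$.
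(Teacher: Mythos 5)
Your proof is correct, and it reaches the estimate by a different technical route than the paper. Both arguments reduce, by linearity and symmetry, to the inverse moment $\mathbf{E}\left[1/(b_1^t+1)\right]$ with $b_1^t\sim\operatorname{Bin}(t,1/n)$; the paper then bounds this quantity from below by Jensen's inequality and from above via the expansion $\frac{1}{1+x}\leq 1-x+x^2$ around the mean together with the binomial variance, whereas you compute it exactly through the identity $\mathbf{E}\left[\frac{1}{X+1}\right]=\frac{1-(1-p)^{t+1}}{(t+1)p}$ and then estimate. Your route is cleaner in that it is an exact computation followed by two elementary comparisons, and it makes the error term transparently $(1+o(1))(n/\kappa)^2$. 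The trade-off is that the exact formula introduces the correction $\frac{n(1-1/n)^{t+1}}{t+1}$, which you must show is $o\left((n/\kappa)^2\right)$; this is where you need $\kappa=\Theta(n\ln n)$, i.e. $C\leq (1-\Omega(1))\ln n$, a condition you correctly flag and which is consistent with the paper's choice $C=O\left(\gamma\ln(1/(\alpha+1/n))\right)$ and its use of $\kappa=\Omega(n\ln n)$ elsewhere. The paper's Jensen-plus-second-moment argument never sees this exponential term and so works for any $t\geq\kappa$ without that side condition, but at the cost of only giving one-sided bounds assembled separately. Your handling of the shift from $\frac{n}{t+1}$ to $\frac{n}{n+\kappa}$ over the window $t\in[\kappa,\kappa+\tilde{O}(\sqrt{n})]$ is also sound, since $(t+1)(n+\kappa)\geq\kappa^2$ and $0\leq n+\kappa-t-1\leq n$ for large $n$.
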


\begin{proof}
Let $\mu :=  \mathbf{E}[1+b_i^t] = 1 + \frac{t}{n}$. By Jensen's inequality, we have 
\[
\mathbf{E}[1- q_F^t] = \mathbf{E}\left[\frac{1}{1+b_i^t}\right] \geq \frac{1}{\mathbf{E}[1+b_i^t]} = \frac{n}{n+t}.
\] 

For the other direction, we have:

\begin{align*}
\mathbf{E}[1-q_F^t] & = \mathbf{E}\left[\frac{1}{n}\sum_{i\in [n]}\frac{1}{b_i^t+1}\right] \\
& = \frac{1}{n}\sum_{i \in [n]}\mathbf{E}\left[\frac{1}{b_i^t + 1}\right] \\
& = \frac{1}{\mu\cdot n}\sum_{i \in [n]}\mathbf{E}\left[\frac{1}{1 + \frac{b_i^t + 1 - \mu}{\mu}}\right]\\
& \leq \frac{1}{\mu \cdot n}\sum_{i\in [n]}\mathbf{E}\left[1 - \left(\frac{b_i^t + 1 - \mu}{\mu}\right) + \left(\frac{b_i^t + 1 - \mu}{\mu}\right)^2\right] \\
& = \frac{n}{n+t} + \frac{n}{n+t}\mathbf{E}\left[\left(\frac{b_1^t + 1 - \mu}{\mu}\right)^2\right] \\
& \leq \frac{n}{n + t} + \left(\frac{n}{t}\right)^2\\
& \leq \frac{n}{n+\kappa} + 2\left(\frac{n}{\kappa}\right)^2.
\end{align*}

The second last inequality follows from the fact that $b_i^t$ is a $\operatorname{Bin}\left(t,\frac{1}{n}\right)$ random variable, so $\operatorname{var}(b_i^t) = \mathbf{E}\left[(b_i^t + 1 - \mu)^2\right] = \frac{t}{n}\left(1 - \frac{1}{n}\right)$.
\end{proof}

With Proposition~\ref{prop:Ep_F} in hand,~\eqref{eqn:p_F} follows from McDiarmid's inequality. Indeed, consider a sequence $\mathcal{X} = (\mathcal{X}_1,\ldots, \mathcal{X}_{t})$ of $t$ randomly chosen jobs, and define $b_i^t(\mathcal{X}) = |\left\{j\in [t]~|~\mathcal{X}_j = i\right\}|$. Let us denote 
\[
q(\mathcal{X}) = \frac{1}{n}\sum_{i \in [n]}\frac{1}{b_i^t(\mathcal{X}) + 1}.
\]

Consider two sequences of jobs $\mathcal{Y}$ and $\mathcal{Z}$ that match in all but the $i$'th proposal, i.e., $\mathcal{Y}_j = \mathcal{Z}_j$ for $j \neq i$. It is clear that $|q(\mathcal{Y}) - q(\mathcal{Z})| \leq \frac{2}{n}$. Thus $q_F(\cdot)$ is $\frac{2}{n}$-Lipschitz, and so after a random sequence of $t$ proposals, we have that for every $\alpha >0$,

%With Proposition~\ref{prop:Ep_F} in hand,~\eqref{eqn:p_F} follows from McDiarmid's inequality. Indeed, given a sequence $\mathcal{X} = (\mathcal{X}_1,\ldots, \mathcal{X}_{\kappa})$ of $\kappa$ proposals, let us use $p(\mathcal{X}) = p_F^{\kappa}(\mathcal{X})$ to denote the probability that the next proposal is a reject. Consider two sequences of proposals $\mathcal{Y}$ and $\mathcal{Z}$ that match in all but the $i$'th proposal, i.e., $\mathcal{Y}_j = \mathcal{Z}_j$ for $j \neq i$. It is clear that $|p(\mathcal{Y}) - p(\mathcal{Z})| \leq \frac{2}{w}$ since if the next proposal is not to $\mathcal{Y}_i$ or $\mathcal{Z}_i$, the probability of reject is exactly the same. Thus $p_F(\cdot)$ is $\frac{2}{w}$-Lipschitz, and so after a random sequence of $\kappa$ proposals, we have that for every $\epsilon >0$,

\[
\mathbb{P}\left(|q_F^{\kappa} - \mathbf{E}q_F^{t}| > \alpha\right) \leq 2\exp\left\{-2\alpha^2 n^2/\kappa\right\}.
\]

It remains to plug in $\alpha = 3\frac{n^2}{\kappa^2}$, observe that $\kappa = \Theta(n \ln n)$, and union bound over all for $t \in [\kappa, \kappa + \tilde{O}{\sqrt{n}}]$ to finish the proof of~\eqref{eqn:p_F}. %\\

To prove part~(\ref{probe}) in Claim~\ref{claim:probbounds}, we observe that in the coupling described above, 
\[
\{i~|~j_i = 0\} \subseteq \{i~|~b_i^{\kappa} = 0\}.
\]
The expected number of empty bins at step $\kappa$ is $n(1-1/n)^t \leq ne^{-\kappa/n} = e^{C}$. So by Markov's inequality,

\[
\mathbb{P}(|\{i~|~j = 0\}|> e^{2C}) \leq \mathbb{P}(|\{i~|~b_i^T = 0\}|> e^{2C}) \leq e^{-C}.
\]

To prove part~(\ref{probl}) in Claim~\ref{claim:probbounds}, we have
\begin{align*}
\mathbb{P}(\{i~|~j_i = 0\} = \emptyset) & = \mathbb{P}(\{i~|~b_i^T = 0\} = \emptyset)\\
& \leq \mathbb{P}(\{i~|~b_i^T = 0\} = \emptyset|T<\kappa + \tilde{O}(\sqrt{n})) + \mathbb{P}(T \geq \kappa + \tilde{O}(\sqrt{n}))\\
& \leq \mathbb{P}\left(\{i~|~b_i^{\kappa + \tilde{O}(\sqrt{n})} = 0\} = \emptyset|T<\kappa + \tilde{O}(\sqrt{n})\right) +\frac{1}{\sqrt{n}}\\
& \leq \frac{\mathbb{P}\left(\{i~|~b_i^{\kappa + \tilde{O}(\sqrt{n})} = 0\} = \emptyset\right)}{\mathbb{P}(T < \kappa + \tilde{O}(\sqrt{n}))} + \frac{1}{\sqrt{n}} \\
& \leq 2\exp\left(-C/2\right) + \frac{1}{\sqrt{n}}.
\end{align*}
%Using, proposition~\ref{prop:unmatchedwomen} and setting $\gamma$ to a constant and $\kappa = w \ln w - c \cdot w$, we get that $\Pr(\mathcal{E}) \leq e^{-\Omega(c)}$.
\end{proof}

\subsection{Expected number of rounds in JPDA: Proof of Claim~\ref{lem:jpda-balls-bins}}

\begin{proof}[Proof of Claim~\ref{lem:jpda-balls-bins}]
We again do this by coupling with balls-in-bins process, one bin for each candidate. We abuse notation and use $[m]$ to denote both the set of bins and the set of candidates. Consider the following coupling between the processes of JPDA and balls-in-bins. Let us denote $\tau_0 = (1+\gamma/2)m \cdot \ln \left(\frac{1+\alpha}{\alpha + 1/m}\right)$.

%\shikha{We already did this for CPDA in the appendix, can we use that or do we have to do this again?  Also, here we are assuming proposals are sampled on the fly which is slightly different from JPDA.}
Suppose at a point in the JPDA algorithm, a job $j \in [n]$ is required to make a (random) proposal, and so far, $\mathbf{C}_j$ is the set of candidates that $j$ has already proposed to. We keep throwing balls-into-bins uniformly at random until we hit a bin $\mathbf{x} \not\in \mathbf{C}_j$. We assign $\mathbf{x}$ as the next proposal of $j$. Thus the distribution of the next proposal of $j$ is uniform on $[m] \setminus \mathbf{C}_j$. 

Let $\tau$ and $\tau'$ denote the number of proposals made and the number of balls thrown in the above algorithm, respectively. Clearly $\tau \leq \tau'$, and so $\mathbf{E}[\tau] \leq \mathbf{E}[\tau']$. Moreover, let $P$ and $P'$meat denote the candidates who have gotten at least one proposal and the bins that have gotten at least one ball just before the final throw. By the coupling, we have $P = P'$, and since there are at most $n$ matched candidates by the end, $|P| = |P'| \leq n-1$, since the JPDA algorithm terminates at the moment when $n$ candidates have a match. We have

\begin{align*}
\mathbb{P}\left(\tau > \tau_0\right) & \leq \mathbb{P}\left(\tau' >  \tau_0\right) \\
& \leq \mathbb{P}\left((|P'| < n)\land \tau' > \tau_0 \right) \\
& \leq \mathbb{P}\left(|P'| < n | \tau' >\tau_0 \right).
\end{align*}

We upper bound this probability by 

\[
\binom{m}{\alpha n + 1}\left(1 - \frac{\alpha n}{m}\right)^{\tau'} \leq \left(\frac{e(1 + \alpha)}{\alpha + 1/m}\right)^{\alpha n}\left(\frac{1}{1+\alpha}\right)^{\tau'}.
\]

Therefore,
\begin{align*}
\mathbf{E}[\tau] - \tau_0 & \leq \sum_{t > \tau_0}\mathbb{P}\left(\tau > t\right) \\
&\leq \left(\frac{e(1 + \alpha)}{\alpha + 1/m}\right)^{\alpha n}\sum_{t \geq \tau_0}\left(\frac{1}{1+\alpha}\right)^{t} \\
& \leq \left(\frac{e(1 + \alpha)}{\alpha + 1/m}\right)^{\alpha n + 1}\left(\frac{1}{1+\alpha}\right)^{\tau_0} \\
& \leq \left(\frac{e(1 + \alpha)}{\alpha + 1/m}\right)^{\alpha n + 1}e^{-\alpha \tau} \\
& \leq e^{\alpha n + 1}\left(\frac{\alpha + 1/m}{1 + \alpha}\right)^{\gamma/2 \alpha n} \\
& = o(1).
\end{align*}

\end{proof}

\end{document}